\algnewcommand\algorithmicinput{\textbf{Initialization:}}
\algnewcommand\init{\item[\algorithmicinput]}
\algnewcommand\algorithmicevol{\textbf{Evolution:}}
\algnewcommand\evol{\item[\algorithmicevol]}
\renewcommand{\natural}{{\mathbb{N}}}
\newcommand{\real}{{\mathbb{R}}}
\newcommand{\until}[1]{\{1,\ldots,#1\}}
\newcommand{\EE}{\mathcal{E}}
\newcommand{\GG}{\mathcal{G}}
\newcommand{\NN}{\mathcal{N}}
\newcommand{\VV}{\mathcal{V}} 
\newcommand{\WW}{\mathcal{W}}
\newcommand{\E}{\mathbb{E}}
\newcommand{\m}{\mathop{\textrm{minimize}}}
\newcommand{\R}{\mathbb{R}}
\newcommand{\x}{\bm{x}}
\newcommand{\NNii}{N_{i}}
\newcommand{\avx}{\bar{x}}
\newcommand{\avxt}{\avx_t}
\newcommand{\avxtm}{\avx_{t-1}}
\newcommand{\avg}{\bar{g}}
\newcommand{\avgt}{\avg_t}
\newcommand{\avgtm}{\avg_{t-1}}
\newcommand{\avd}{\bar{d}}
\newcommand{\avdt}{\avd_t}
\newcommand{\avdtm}{\avd_{t-1}}
\newcommand{\bx}{\mathbf{x}}
\newcommand{\bxt}{\bx_t}
\newcommand{\bxtm}{\bx_{t-1}}
\newcommand{\bg}{\mathbf{g}}
\newcommand{\bgt}{\bg_t}
\newcommand{\bgtm}{\bg_{t-1}}
\newcommand{\btgt}{\tilde{\bg}_t}
\newcommand{\bd}{\mathbf{d}}
\newcommand{\bdt}{\bd_t}
\newcommand{\bdtm}{\bd_{t-1}}
\newcommand{\one}{\mathbf{1}}
\newcommand{\hP}{\hat{P}}
\newcommand{\hPit}{\hP_{i,t}}
\newcommand{\hq}{\hat{q}}
\newcommand{\hqit}{\hq_{i,t}}
\newcommand{\hr}{\hat{r}}
\newcommand{\hrit}{\hr_{i,t}}
\newcommand{\hU}{\hat{U}}
\newcommand{\hUit}{\hU_{i,t}}
\newcommand{\xstart}{x_\star(t)}
\newcommand{\xstartm}{x_\star(t-1)}
\newcommand{\fstart}{f_\star(t)}
\newcommand{\hx}{\hat{x}}
\newcommand{\hxstart}{\hx_\star(t)}
\newcommand{\hxstartm}{\hx_\star(t-1)}
\newcommand{\xit}{x_{i,t}}
\newcommand{\xitm}{x_{i,t-1}}
\newcommand{\xjtm}{x_{j,t-1}}
\newcommand{\dit}{d_{i,t}}
\newcommand{\ditm}{d_{i,t-1}}
\newcommand{\djtm}{d_{j,t-1}}
\newcommand{\git}{g_{i,t}}
\newcommand{\gitm}{g_{i,t-1}}
\newcommand{\yit}{y_{i,t}}
\newcommand{\eit}{\epsilon_{i,t}}
\newcommand{\xis}{x_{i,s}}
\newcommand{\yis}{y_{i,s}}
\newcommand{\hf}{\hat{f}}
\newcommand{\hfstart}{\hf_\star(t)}
\newcommand{\ta}{\xi}
\newcommand{\ha}{\hat{\ta}}
\newcommand{\tv}{\bm{\chi}}
\newcommand{\tai}{\ta_i}
\newcommand{\hait}{\ha_{i,t}}
\newcommand{\haitm}{\ha_{i,t-1}}
\newcommand{\tvis}{\bm{\chi}_{i,s}}
\newcommand{\tvit}{\bm{\chi}_{i,t}}
\newcommand{\Sit}{\mathbf{s}_{i,t}}
\newcommand{\Rit}{R_{i,t}}
\newcommand{\Ritm}{R_{i,t-1}}
\newcommand{\Vt}{\mathbf{v}_t}
\newcommand{\Vtm}{\mathbf{v}_{t-1}}
\newcommand{\Zt}{\mathbf{z}}
\newcommand{\Ztm}{\mathbf{z}}
\newcommand{\Ztau}{\mathbf{z}}
\newcommand{\Vtb}{\mathbf{v}_{\bar{t}}}
\newcommand{\bUB}{c_{\nabla} }
\newcommand{\cUB}{c_x}
\DeclareMathOperator{\col}{col}
\DeclareMathOperator{\unpack}{\textsc{unpack}}
\newcommand\oprocendsymbol{\hbox{$\square$}}
\newcommand\oprocend{\relax\ifmmode\else\unskip\hfill\fi\oprocendsymbol}
\def\eqoprocend{\tag*{$\square$}}
\newtheorem{theorem}{Theorem}[section]
 \newtheorem{lemma}[theorem]{Lemma}
\newtheorem{remark}[theorem]{Remark}
\newtheorem{assumption}[theorem]{Assumption}
\author[1]{Ivano Notarnicola}
\author[2]{Andrea Simonetto}
\author[1]{Francesco Farina}
\author[1]{Giuseppe Notarstefano}
\affil[1]{\small Department of Electrical, Electronic and Information Engineering, 
	  University of Bologna, Bologna, Italy, \texttt{name.lastname@unibo.it}.}
\affil[2]{\small AI\&Quantum team at IBM Research Europe, Dublin, Ireland \texttt{andrea.simonetto@ibm.com}.}
\title{Distributed Personalized Gradient Tracking \\ with Convex Parametric Models	\thanks{
    This result is part of a project that has received funding from the European Research Council (ERC)
    under the European Union's Horizon 2020 research and innovation programme
    (grant agreement No 638992 - OPT4SMART). }
}
\begin{document}

\maketitle

\begin{abstract}%
We present a distributed optimization algorithm for solving online personalized optimization problems over a network of computing and communicating nodes, each of which linked to a specific user. The local objective functions are assumed to have a composite structure and to consist of a known time-varying (engineering) part and an unknown (user-specific) part. Regarding the unknown part, it is assumed to have a known parametric (e.g., quadratic) structure a priori, whose parameters are to be learned \emph{along with} the evolution of the algorithm. The algorithm is composed of two intertwined components: (i) a dynamic gradient tracking scheme for finding local solution estimates and (ii) a recursive least squares scheme for estimating the unknown parameters via user's noisy feedback on the local solution estimates. The algorithm is shown to exhibit a bounded regret under suitable assumptions. Finally, a numerical example corroborates the theoretical analysis.%
\end{abstract}

\section{Introduction}

Cyber-physical and social systems (CPSS) are becoming increasingly important in today's society, whenever human actions, preferences, and behaviors are added to the cyber and physical space~\cite{Dressler2018}. Important examples of this class of systems are the energy grid~\cite{Chatupromwong2012,Ospina2020}, transportation infrastructures~\cite{Quercia2014}, personalized healthcare~\cite{Menner2020}, and robotics~\cite{Luo2020}. 

A key feature of CPSS is the trade-off between given engineered performance metrics and user's (dis)comfort, perceived safety, and preferences. While, on one side, engineered goals may come from well-defined metrics based on physical models and can be time-varying to model data streams~\cite{dall2019optimization}, on the other side, user's (dis)satisfaction is more complex to model. The ``utility'' function to be optimized for the users is often based on averaged models constructed on generic one-fits-all models. However, good averaged models of users' utilities are difficult to obtain for the associated cost and time of human studies, the data is therefore scarce and biased. %
For these reasons, more tailored and personalized strategies are to be preferred when dealing with humans~\cite{simonetto_personalized_2019}. 

This paper studies time-varying optimization problems distributed across a network of $N$ agents. 
Each agent represents both a physical node (e.g., a home or a car) and its associated user. 
The optimization has a cost function that comprises of both a known time-varying 
engineering cost, and an unknown user specific (dis)satisfaction function. Formally, 
we define the \emph{distributed personalized problem} as
\begin{equation}\label{pb:problem}
\begin{aligned}
  & \m_{x \in \R^n}
  & & \sum_{i=1}^N \underbrace{V_i(x;t) + U_i (x)}_{f_i(x; t)}, & t \in \natural
\end{aligned}
\end{equation}
where $x \in\R^n$ represents the common decision variable, and $t\in \natural$ represents the time index; each agent $i$ is equipped 
with the known time-varying engineering cost $V_i(x;t): \R^n\times \natural \to \R$ and with the 
unknown user's dissatisfaction function $U_i(x):\R^n \to \R$. The aggregated cost $f_i(x,t):\R^n\times \natural\to\R$ 
is associated to agent $i$ only%
. Then, by \emph{solving} problem~\eqref{pb:problem}, we mean to 
generate a sequence of tentative solutions, say $\{\avxt\}_{t=1}^T$, which 
make the corresponding cost $\sum_{i=1}^N f_{i,t} (\avxt;t)$ as close as possible to its (current) 
optimal value, say $\fstart$, for all $t$. 
In particular, as customary in online optimization, we measure the quality of the given sequence $\{\avxt\}_{t=1}^T$ 
using the \emph{cumulative dynamic regret} up to time $T$ defined by
\begin{align}\label{eq:regret}
  R_T(\{\avxt\}_{t=1}^T) & \triangleq
  \sum_{t=1}^T \left ( \sum_{i=1}^N f_i(\avxt; t)  - \fstart \right )
\end{align}
and the \emph{average dynamic regret} up to time $T$ defined by $R_T(\{\avxt\}_{t=1}^T)/T$. 
As it is customary in the distributed setting, we also complement these measures 
with the consensus metric $C_{T}(\{x_{i,T}\}_{i=1}^N, \bar{x}_{T}) 
\triangleq \sum_{i=1}^N\|x_{i,T}-\bar{x}_{T}\|^2$, quantifying how far from consensus
the local decisions $x_{i,T}$ are at time $T$.

The challenges in solving problem~\eqref{pb:problem} are multiple. First, a \emph{distributed} strategy 
must be developed. 
Then, not only the optimization problem changes over time, but its cost function is not completely known 
by the agents and it has to be learned \emph{concurrently} to the solution of the problem, 
by employing noisy user's feedback. 

This paper addresses all the above mentioned challenges and provides the
following main contributions.

We propose a \emph{personalized gradient tracking} distributed scheme combining
an online optimization algorithm with a learning mechanism, and derive a bound
on its dynamic regret.
As a building block for the proposed scheme, we develop a dynamic gradient
tracking algorithm that, given a smooth strongly convex time-varying cost
function, is capable of tracking its solution sequence
$\{\xstart\}_{t\in\natural}$ in a distributed way up to a bounded error, in line
with time-varying optimization results~\cite{dall2019optimization, Simonetto2020}.
Notice that, this block is a contribution per se to the distributed online
optimization literature. 

In the proposed personalized gradient tracking strategy, the dynamic gradient
tracking update is interlaced with a learning mechanism to let each node learn
the user's cost function $U_i(x)$, by employing noisy user's feedback in the
form of a scalar quantity given by $y_{i,t} = U(x_{i,t}) + \epsilon_{i,t}$,
where $x_{i,t}$ is the local, tentative solution at time $t$ and
$\epsilon_{i,t}$ is a noise term.
It is worth pointing out that in this paper, we consider convex parametric
models, instead of more generic non-parametric models, such as Gaussian
Processes~\cite{Rasmussen2006,
  srinivas2012information, simonetto_personalized_2019,
  Ospina2020}, or convex regression~\cite{Mazumder2019, SimonettoCR2020}. The
reasons for this choice stem from the fact that \emph{(i)} user's functions are
or can be often approximated as convex (see, e.g.,\cite{Johnson2018,Chen2020}
and references therein), which makes the overall optimization problem much
easier to be solved; \emph{(ii)} convex parametric models have better
\emph{asymptotical rate} bounds\footnote{%
  By asymptotical rate, we mean how the approximation gets closer to the true
  function as the number of data points (feedback) increases.
  Shape-constrained Gaussian processes can be used to impose convexity constraints in a practical sense, but their computational complexity scales as $O(t^3)$, where $t$ is the number of data points, they are not trivially extended for decision spaces with dimensions $n>1$, and asymptotical rate bounds are not yet available.  Convex regression has asymptotical rate bounds of the form of $O(t^{-1/n})$, which is very slow compared to the parametric models, and their computational complexity scales at least as $O(t^2 n^3)$.
\label{ft:cc}
  } %
than convex non-parametric models~\cite{Mazumder2019}, which is fundamental when
attempting at learning with scarce data; and \emph{(iii)} a solid online theory
already exists in the form of recursive {least squares
  (RLS)~\cite{ljung1999system, GVK313736715,Sahu2016}}. Therefore, our learning
mechanism is based on a RLS algorithm, whose asymptotical rate is characterized.

Although the high-level algorithmic idea of combining a distributed (online)
optimization update with a recursive regression scheme appears intuitively
reasonable, the concurrent application of the two updates at the same time scale
introduces \emph{several challenges} in the analysis that have been addressed by
properly applying and adapting tools from online and distributed optimization
and from parameter estimation.

To summarize, the main goal of the paper is to provide a first-of-its-kind algorithm to
simultaneously learn and solve optimization problems with unknown convex parametric models online and in 
a distributed fashion, while at the same time incorporating human preferences in the loop. 

\paragraph*{Literature survey}
A centralized bandit framework with a similar
structure to the one considered in this paper has been introduced
in~\cite{simonetto_personalized_2019}, even though in the context of
non-parametric learning (see also references therein for a comprehensive
literature survey). %

In the distributed setup addressed by this paper, we assume that the function
$U_i$ can be modeled as a linearly parametrized convex quadratic function, whose
parameters are unknown and have to be learned. This represents a first step
towards generic parametric models\footnote{The approach in this paper can be
  extended to linearly parametrized convex {functions, but we assume a quadratic
  structure for the sake of clarity. If the user's parametric model is more
complex, we can always focus on local results, where the model is approximately
convex and linear in the parameters}, see also~\cite{Keshavarz2011} for examples of linearly parametrized models applied to inverse control and optimization, which are close in spirit to our problem. \label{foot}}. Non-parametric approaches in the literature to learn unknown functions are e.g., (shape-constrained) Gaussian processes~\cite{Rasmussen2006, Ospina2020} and convex regression~\cite{Mazumder2019, SimonettoCR2020}. As said, we prefer here parametric models for their faster asymptotical rates, cheap online computational load, and ease of introducing convexity constraints.

Another line of research, not followed in this paper, is zero-order (stochastic) online convex optimization, where the cost function is assumed convex, but not known, and its gradient is estimated by function evaluations~\cite{Flaxman2005,Duchi2015}. Even though this line of research is extremely relevant for human-in-the-loop settings (see, e.g.,~\cite{Luo2020}), we distinguish ourselves from it since we do not assume that the user's feedback is available at each time $t$. This is key in human systems where feedback may come intermittently, and still one needs to be able to solve the optimization problem. Imagine for example that a particular user is content with whichever decision and {she/he} does not feel the need for giving feedback, after a few initial ones. Then our algorithm would work seamlessly, since it builds a model for $U_i$, while zero-order methods would still need function evaluations (i.e., feedback) to proceed.   

Regarding optimization problems with (known) time-varying cost function, they have been addressed in 
the distributed optimization literature, both in the stochastic (see, e.g.,~\cite{farina2019randomized,pu2020distributed} 
and references therein) and online/time-varying settings, e.g.,~\cite{Rahili2015,shahrampour2017distributed,
Maros2019,dall2019optimization,Ling2013,bedi_asynchronous_2019,yuan2020can}, and references therein.
Our algorithm relies on the so-called gradient
tracking algorithm firstly proposed in~\cite{varagnolo2016newton,dilorenzo2016next,
  nedic2017achieving,qu2018harnessing}).
The gradient tracking scheme has been originally designed for static 
optimization problems while it has been applied later to 
online problems in, e.g.,~\cite{zhang2019distributed,yuan2020can}. 
The most important difference here is that not knowing either the cost function, the minimum dynamics, 
or both, poses \emph{important additional challenges in ensuring convergence concurrently with learning}.

\smallskip
\paragraph*{Notation}
The $j$-th component of a vector $v$ is $[v]_j$ 
while the $j$-th row of a matrix $A$ is $[A]_j$.
For $m$ vectors $v_1,\dots,v_m$, 
we define $\col(v_1,\dots,v_m) \triangleq [v_1^\top,\dots, v_m^\top]^\top$. 
Given $c\in\R$, $b\in\R^n$ and $A\in\R^{n\times n}$, let 
$v \triangleq \col(c,b,[A]_1^\top,\dots,[A]_n^\top)\in\R^{1+n+n^2}$, then we define 
the operator $\unpack(v)$ so that $(A,b,c) = \unpack(v)$.
The all-one vectors of appropriate dimension is $\one$. 
Gradients w.r.t.~the variable $x$ of the function $f(x;t)$ are indicated with $\nabla f(x;t)$.

\section{Problem Assumptions}\label{sec:set-up}

Problem~\eqref{pb:problem} is to be solved in a distributed way by a network of $N$ agents. We have depicted the problem setting in Figure~\ref{fig:setup}: each agent is composed by a physical node (e.g., a home, a car, a mobile phone) linked to an end-user. The nodes are equipped with a time-varying cost $V_i$ and can evaluate a noisy version of $U_i$ by asking the user for feedback on a particular decision $x_{i,t}$. Each node can compute and communicate with its direct neighbors over a fixed network. In this context, each agent $i$ has only a \emph{partial} knowledge of the target problem. %

\begin{remark}
We assume that the users give feedback at each time $t$ that they are asked for it, with no delay. This is not a limitation: we could consider cases in which users give intermittent feedback at different time-scales and with delays. This would mean that the learning would be slower. From the optimization perspective, since the knowledge of $U_i$ changes every time a new feedback is received, the worst case scenario is when feedback is given at each time (see also~\cite{simonetto_personalized_2019}).\oprocend
\end{remark}

\begin{figure}
\centering
\hspace*{5mm}\includegraphics[width=0.90\columnwidth]{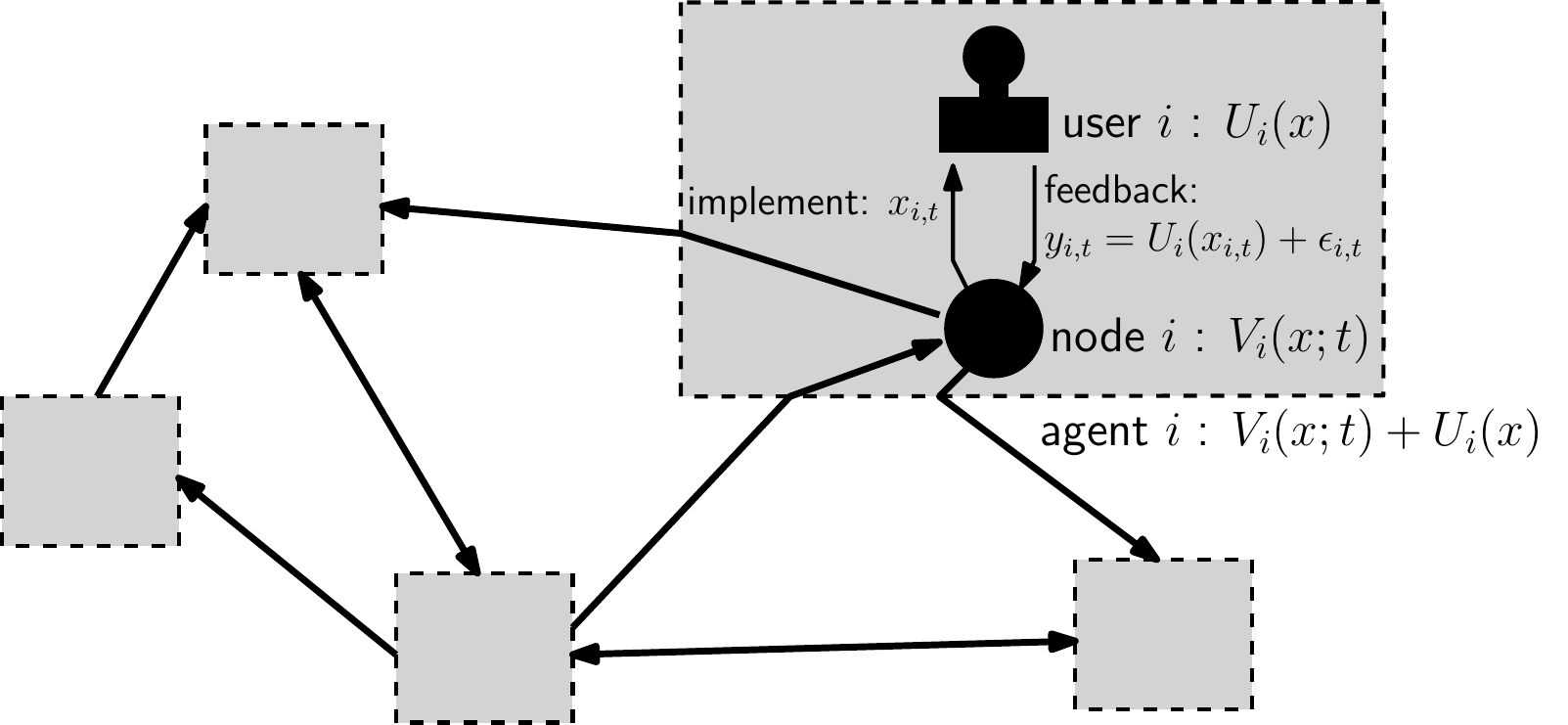}
\caption{The problem setup: a network of connected and communicating nodes, each node with associated an end-user from which feedback may be asked on their own dissatisfaction on a particular decision. }
\label{fig:setup}
\end{figure}

We consider the following assumption on the problem structure.
\begin{assumption}\label{assumption:functions}
  For all $i=1\dots,N$ it holds that: 
  
  (i) The function $V_i(x;t)$ is $m_V$-strongly convex and 
  its gradients are $L_V$-Lipschitz continuous for all $t \in\natural $.

  (ii) The function $U_i(x)$ has a quadratic structure, i.e.,
  $
    U_i(x) = \tfrac{1}{2}\,x^\top P_i x + q_i^\top x + r_i,
  $
  with $P_i \in\R^{n\times n}$ symmetric and with eigenvalues in the range $[m_i, L_i]$, with $L_i \ge m_i >0$, $q_i \in\R^n$, $r_i \in\R$.
  
  (iii) The parameters $P_i$, $q_i$ and $r_i$ of $U_i$ are \emph{unknown}, however one knows a (loose) bound on $L_i$, and noisy measurements %
  of $U_i(x)$ can be taken for any point $x\in\R^n$ as
  $   y_i = U_i(x) + \epsilon_i,
  $%
  where $\epsilon_i$ denotes a generic scalar zero-mean noise with finite variance.
  
  (iv) The optimizer of problem~\eqref{pb:problem}, $\xstart$, is finite for 
  each $t  \in \mathbb{N}$, and $\|\xstart \| < \infty$. 
  \oprocend
\end{assumption}

Assumption~\ref{assumption:functions} on the engineering function $V_i(x;t)$ 
is quite standard in the time-varying literature~\cite{dall2019optimization, Simonetto2020}. %

As for the the $m_i$-strongly convex, $L_i$-smooth quadratic model of $U_i(x)$,
we point out that, though partially restrictive, this structure is reasonable as discussed in 
the introduction (see also Footnote~\ref{foot}) and it can be relaxed. Loose bounds on $L_i$ can be obtained from experiments and average user data. 
Finally, the finiteness assumption on the optimizer (which exists and it is unique for \emph{(i)-(ii)}) just ensures that the problem is well-posed even in a time-varying setting.%

Since each $U_i(x)$ is quadratic but unknown, its parameters {need to} be estimated over time. 
Therefore, we let each agent $i$ consider an approximation of $U_i(x)$ at each time $t$ given by
\begin{align} \label{eq:hatUit}
  \hUit(x) \triangleq 
  \tfrac{1}{2}\, x^\top \hPit x + \hqit^\top x + \hrit,
\end{align}
where $\hPit, \hqit$ and $\hrit$ represent the current estimates of the true (unknown) 
parameters $P_i$, $q_i$ and $r_i$.
We then define the local estimated cost of agent $i$ as
\begin{align*}
  \hf_i(x;t)\triangleq V_i(x;t) + \hUit(x).
\end{align*}

Defining $\hf(x;t) \triangleq \sum_{i=1}^N\hf_i(x; t)$, we denote by $\hfstart$ 
its minimum value and by $\hxstart$ the minimizer. 
Consistently, we define $f(x;t)\triangleq\sum_{i=1}^N f_i(x; t)$
and its minimum value $\fstart$ attained at some $\xstart$.

At this point, we make no specific choice on the type of estimation/learning algorithm 
to determine $\hUit(x)$, provided that it satisfies the following.

\begin{assumption}\label{assumption:bounded_variations}
For the chosen estimation algorithm, the estimated $\hUit(x)$ is bounded for 
any finite $x$, for all $i$ and $t$. Moreover:
 
(i) \emph{With high probability}, the estimated $\hPit$ is symmetric and it has eigenvalues in the set $[0, \mu L_i]$, $\mu>1$. %
I.e., for any $\delta \in(0,1]$ and $\mu>1$, there exists a finite $\bar{t}$, for which:
  $$
\mathbf{Pr}(\mu L_i I_n \ge \hPit \ge 0\,|\, \forall t\geq \bar{t}) \ge 1-\delta,
  $$

(ii) When the first fact holds true, there exist constants $\cUB,\bUB <\infty$ such that:
  \begin{align*}
    & \|\hxstart-\hxstartm \| \le \cUB,
    \\ 
    & \max_i \, \| \nabla \hf_i(\hxstart;t) - \nabla \hf_i(\hxstart;t-1) \|  \le\bUB.
    \eqoprocend
  \end{align*}
\end{assumption}

Assumption~\ref{assumption:bounded_variations}(i) is a mild assumption, and it will hold for our RLS scheme~[Cf. Appendix~\ref{ap:RLS:prop}]. 
It imposes that eventually (and with high probability), the estimated values of $\hPit$ get close to obtain the properties of the true $P_i$.%

Once $\mu L_i I_n \ge \hPit \ge 0$, then the approximate problems are convex 
and for Assumption~\ref{assumption:functions}, the optimizer of $\hat{f}(x;t)$ is finite.
Then, Assumption~\ref{assumption:bounded_variations}(ii) is mild and standard in time-varying optimization: it ensures that the problem changes are bounded. 
This in turn guarantees that one is able to track its solution up to a meaningful error bound.

\begin{remark}
A key aspect in time-varying optimization is the $O(T)$ path length, defined as $P_T = \sum_{t=1}^T \|\hxstart-\hxstartm \|$. This is different from (bandit) online convex optimization which often assumes $P_T = o(T)$ or finite $P_T$. For a $O(T)$ path length, one cannot expect less than finite asymptotic error bounds and $O(T)$ cumulative dynamic regret bounds~\cite{Mokhtari2016,shahrampour2017distributed,dall2019optimization}.  \oprocend
\end{remark}

With Assumption~\ref{assumption:bounded_variations} in place, after $\bar{t}$ 
and for all $t\ge\bar{t}$, the approximate cost function $\hat{f}(x;t)$ is 
$m$-strongly convex and $L$-smooth with
$  m = N m_V,\,%
  L = N L_V + \mu \sum_{i=1}^N L_i,$
with probability $1-\delta$, and the local cost function $\hat{f}_i(x;t)$ is $(L_V+\mu L_i)$-smooth. In addition, and with Assumption~\ref{assumption:functions}(ii), for the gradient $\nabla \hat{f}_i(x;t)$ one has that
\begin{align}
  \| \nabla \hat{f}_i(x;t) - \nabla \hat{f}_i(x;t-1) \| 
  &= 
  \| \nabla \hat{f}_i(x;t) - \nabla \hat{f}_i(x;t-1) \nonumber 
  \\ 
  & \hspace*{-1.5cm} \pm (\nabla \hat{f}_i(\hxstart;t) - \nabla \hat{f}_i(\hxstart;t-1)) \| \nonumber 
  \\
   &\hspace*{-2cm}\leq 2 (L_V+\mu L_i) \| x - \hxstart\| + \bUB,\label{eq:mu}
\end{align}
with probability $1-\delta$. 
In addition, the estimation error $|\hUit(x) - U_i(x)|$ is bounded 
for any finite $x$ since $\hUit(x)$ is proper, 
and one can define the \emph{estimation error length} as,
\begin{equation}
c_U := \sum_{t=1}^T |\hUit(x) - U_i(x)|.
\end{equation}
Under the reasonable assumption that the estimator delivers a bounded error estimation, i.e., $|\hUit(x) - U_i(x)|< c_u < \infty$ for all $i, x, t\geq \bar{t}$, then $c_U = O(T)$. More sensible estimation algorithms will yield $c_U = o(T)$, as we will show.

Regarding the structure of the communication network, 
it is modeled through a weighted graph $\GG=(\VV,\EE, \WW)$ in which $\VV=\until{N}$ 
denotes the set of nodes, $\EE\subseteq\VV\times\VV$ the set of edges 
and $\WW=[w_{ij}]\in\R^{N\times N}$ the weighted adjacency matrix.
We let $\GG$ satisfy the following.
\begin{assumption}\label{assumption:communication}
  The graph $\GG$ is directed and strongly connected.
  The weighted adjacency matrix $\WW$ is doubly-stochastic, i.e., $\sum_{j=1}^N w_{ij}=1$ 
  for all $i= 1,\ldots, N$ and $\sum_{i=1}^N w_{ij}=1$ for all $j= 1,\ldots, N$.
  Moreover, for all $i= 1,\ldots, N$, $w_{ij}>0$ if and 
  only if $j\in\NNii$, where $\NNii\triangleq\{j\mid (j,i)\in\EE\}\cup\{i\}$ is the set of 
  in-neighbors of node $i$.\oprocend
\end{assumption}
The condition above does not include all possible communication topologies, 
however it includes the broad class of balanced digraphs. See~\cite{gharesifard2012distributed}
for further details.

\section{Personalized Gradient Tracking\\ Distributed Algorithm} \label{sec:algorithm}

We describe now our novel distributed online algorithm for solving 
Problem~\eqref{pb:problem}, along with its theoretical properties. %

\subsection{Distributed Algorithm Description}\label{sec:algorithm_description}
Each agent $i$ stores and updates several states. First, it has a local estimate $\xit \in \R^n$ of the 
solution of problem~\eqref{pb:problem} at iteration $t$. Second, it maintains
local estimates $\hPit\in \R^{n\times n}$, $\hqit \in \R^n$ and $\hrit\in \R$ of the 
unknown parameters of the local function $U_i(x)$ {(cf.~\eqref{eq:hatUit})}.
Third, it uses an auxiliary state $\dit\in \R^n$ to
reconstruct an the current value of the gradient of $\sum_{i=1}^N\hf_i(\xit; t)$.

For computational convenience, the local variable $\xit$ will be often arranged in the 
following vectorized form
\begin{align} \notag
  \tvit=\col(1, \xit, [x_{i,t}]_{1}\xit/2, \dots, [x_{i,t}]_{n}\xit/2) \in \R^{1+n+n^2}.
\end{align}
Each iteration $t \in \natural$ of the distributed algorithm consists in three consecutive actions performed 
by each agent $i$.
\begin{enumerate}
  \item A \emph{feedback} on the current local solution estimate $\xit$ is obtained from the user. 
  In particular, a noisy \emph{measurement} of the output of $U_i(\cdot)$ evaluated at $\xit$ is computed 
  and stored as $\yit$ given in~\eqref{eq:feedback}.
  
  \item The estimates $\hPit$, $\hqit$ and $\hrit$ of the unknown 
  parameters $P_i$, $q_i$ and $r_i$ of $U_i$ are updated by means 
  of an ad-hoc \emph{learning} procedure~\eqref{eq:algo_learning}. This procedure relies on 
  a RLS scheme which makes use only of the most updated 
  data $(\yit, \xit)$, thus not requiring to store and use all the past 
  points generated by the distributed algorithm.
  
  \item The local solution estimate $\xit$ of problem~\eqref{pb:problem} at time $t$ is 
  updated via a \emph{dynamic gradient tracking} distributed 
  algorithm~\eqref{eq:gradient_tracking_table}, whose aim is to track the sequence 
  of solutions $\{\xstart\}_{t \in \natural}$ of problem~\eqref{pb:problem}. 
\end{enumerate}

Algorithm~\ref{algo} reports the pseudocode of the proposed scheme,
with step-size $\alpha>0$ and tuning parameter $\eta\gg 0$.
\begin{algorithm}[h!]
  \begin{algorithmic}
    \init $x_{i,0}$ arbitrary, 
    $d_{i,0}=\nabla \hf_i(x_{i,0}; 0)$, 
    $R_{i,0}=\eta I_{1+n+n^2}$,
    $\ha_{i,0} = 0$. %

    \vspace{1ex}
    
    \evol{$t=1,2,\dots$} 

    \State \textsc{Measuring/Feedback}
    \begin{align}\label{eq:feedback}
      \yit &= U_i(\xit) + \eit 
    \end{align}

    \State \textsc{Learning} 
    \begin{subequations}\label{eq:algo_learning}
    \begin{align} 
          \Sit &= \frac{\Ritm \tvit}{ 1 + \tvit^\top \Ritm \tvit } \\
          \Rit &=  \Ritm- (1 + \tvit^\top \Ritm \tvit ) \Sit \Sit^\top \\
          \hait &= \haitm + (\yit - \tvit^\top \haitm) \Sit
    \end{align}
    \end{subequations}
      
      \begin{align}\label{eq:unpack}
        (\hPit, \hqit,\hrit) = \unpack(\hait), \quad \hPit \leftarrow (\hPit + \hPit^\top)/2
      \end{align}

      \vspace{1ex}

    \State \textsc{Dynamic Gradient Tracking}
    \begin{subequations}\label{eq:gradient_tracking_table}
    \begin{align}
    \label{eq:x_update}
      \xit & = \sum_{j \in \mathcal{N}_i} w_{ij}\xjtm - \alpha \, \ditm
      \\
      \label{eq:g_update}
      \git & = \nabla V_i(\xit; t) + \hPit^\top \xit + \hqit
      \\[0.8em]
      \label{eq:d_update}
      \dit & = \sum_{j \in \mathcal{N}_i} w_{ij}\djtm + (\git - \gitm)
    \end{align}
    \end{subequations}
  \end{algorithmic}
  \caption{Personalized Gradient Tracking}
  \label{algo}
\end{algorithm}

\subsection{Parameters Estimation via Recursive Least Squares (RLS)}
\label{sec:RLS}

The aim of the learning part of Algorithm~\ref{algo} (cf.~\eqref{eq:algo_learning}) is to provide
a recursive scheme to let each agent $i$ estimate the unknown parameters of $U_i$. 
Specifically, the considered scheme aims at solving, for each $t$, 
the least squares (LS) problem
\begin{align}\label{pb:estimation}
  \m_{P \in \R^{n\times n}, \, q\in \R^n, \, r \in \R} 
  \: \: 
    \sum_{s=1}^{t}\Big( \tfrac{1}{2}\,\xis^\top P \xis + q^\top \xis + r - \yis\Big)^2,
\end{align}
for a given set of estimate-measurement pairs $(\xis, \yis)_{s=1}^t$.

By defining $\tai \triangleq \col(r, q, [P]_1^\top, \dots, [P]_n^\top)\in\R^{1+ n+n^2}$,
problem~\eqref{pb:estimation} can be equivalently recast into 
\begin{align}\label{pb:estimation_eq}
  \hait = \arg\min_{ \tai } \: \: \sum_{s=1}^{t}(\tai^\top \tvis - \yis)^2,
\end{align}
and $\hPit, \hqit$ and $\hrit$ can be then retrieved from $\hait$ 
via~\eqref{eq:unpack} (cf.~ the Notation) and then made symmetric. 
Now, instead of keeping track of all the data, problem ~\eqref{pb:estimation_eq} is 
solved as data become available by means of a RLS
approach~\cite[Chap.~11]{ljung1999system}, 
yielding~\eqref{eq:algo_learning} in Algorithm~\ref{algo}. 

The estimate computed by using RLS differs from the standard, non recursive, 
least squares (LS) counterpart only in the initial iterations, due to the initialization, 
which is quickly negligible~\cite[Chap.~11]{ljung1999system}; the asymptotic 
convergence properties coincide with those of the non recursive LS approach. 
Upon defining $\xi_{i,\star} = \col(r_i, q_i, [P_i]_1^\top, \dots, [P_i]_n^\top)$ for all $i$, then
for each agent the following classical result holds.

\begin{lemma}[Large sample aymptotic properties of LS]\label{lemma:LS}
  Let the data sequence $\{(\tvis,y_{i,s})\}_{s\ge 0}$ be such that:
  \begin{itemize}
    \item the $\{(\tvis,y_{i,s})\}_{s\ge0}$ is a realization of a jointly stationary and ergodic stochastic process;
    \item the matrix $\Sigma_{xx}=\E[\tvis\tvis^\top]$ is nonsingular;
    \item for $\omega_{i,s} \triangleq \tvis\epsilon_{i,s}$, then $\{\omega_{i,s}\}$ is a martingale 
    difference sequence with finite second moments (cfr.~\cite[Assumption~2.5]{GVK313736715}), 
    and denote $S=\E[\omega_{i,s} \omega_{i,s}^\top]$.
  \end{itemize} 
  Then,
  \begin{align}\label{eq:convergence_distribution}
    \sqrt{t}(\hait - \xi_{i,\star} ) \xrightarrow{D} \NN(0,\Sigma_{xx}^{-1}S\Sigma_{xx}^{-1} ), \text{  as } t\to\infty,
  \end{align}
  where the notation $\xrightarrow{D}$ stands for convergence in distribution. 
\end{lemma}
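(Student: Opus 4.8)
The plan is to reduce~\eqref{eq:convergence_distribution} to two classical limit theorems — an ergodic law of large numbers for the sample second-moment matrix of the regressors, and a martingale central limit theorem for the ``score'' $\omega_{i,s}=\tvis\epsilon_{i,s}$ — and then to combine them via Slutsky's lemma. The first step is to exploit that the quadratic model of Assumption~\ref{assumption:functions}(ii)--(iii) is \emph{exact} and \emph{linear} in the stacked parameter $\xi_{i,\star}=\col(r_i,q_i,[P_i]_1^\top,\dots,[P_i]_n^\top)$: with the feature construction used for $\tvis$ one checks $\tvis^\top\xi_{i,\star}=U_i(\xis)$, so that $y_{i,s}=\tvis^\top\xi_{i,\star}+\epsilon_{i,s}$ holds for every $s$ with no model-mismatch term.

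Writing the minimizer of~\eqref{pb:estimation_eq} in closed form, $\hait=\big(\sum_{s=1}^t\tvis\tvis^\top\big)^{-1}\sum_{s=1}^t\tvis y_{i,s}$ — which is well defined for $t$ large enough because $\Sigma_{xx}$ is nonsingular and, by ergodicity, the sample matrix is eventually nonsingular — substituting the model and rearranging yields the key identity
\begin{align*}
  \sqrt{t}\,(\hait-\xi_{i,\star})
  = \Big(\tfrac{1}{t}\textstyle\sum_{s=1}^t\tvis\tvis^\top\Big)^{-1}
    \tfrac{1}{\sqrt{t}}\textstyle\sum_{s=1}^t\omega_{i,s}.
\end{align*}
The RLS recursion~\eqref{eq:algo_learning} differs from this plain LS estimate only through the finite initialization $R_{i,0}=\eta I$, whose effect is $O(1/t)$ and hence immaterial for the $\sqrt{t}$-scaled limit, cf.~\cite[Chap.~11]{ljung1999system}; so it suffices to analyze the right-hand side.

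I would then treat the two factors separately. The process $\{\tvis\tvis^\top\}$ is a measurable function of the jointly stationary and ergodic process, hence itself stationary and ergodic, so the pointwise ergodic theorem gives $\tfrac{1}{t}\sum_{s=1}^t\tvis\tvis^\top\to\Sigma_{xx}$ almost surely, and, since $\Sigma_{xx}$ is nonsingular, the continuous mapping theorem yields $\big(\tfrac{1}{t}\sum_{s=1}^t\tvis\tvis^\top\big)^{-1}\to\Sigma_{xx}^{-1}$ a.s. For the second factor, $\{\omega_{i,s}\}$ is by hypothesis a stationary ergodic martingale difference sequence with finite second moments and covariance $S=\E[\omega_{i,s}\omega_{i,s}^\top]$, so the martingale CLT gives $\tfrac{1}{\sqrt{t}}\sum_{s=1}^t\omega_{i,s}\xrightarrow{D}\NN(0,S)$. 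Slutsky's lemma applied to the product of an almost surely convergent (deterministic-limit) sequence and a weakly convergent one then delivers $\sqrt{t}(\hait-\xi_{i,\star})\xrightarrow{D}\Sigma_{xx}^{-1}\NN(0,S)=\NN(0,\Sigma_{xx}^{-1}S\Sigma_{xx}^{-1})$, which is~\eqref{eq:convergence_distribution}.

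The main obstacle — and the only genuine use of the three bullet hypotheses — is the martingale CLT itself, namely verifying the conditional-variance stability $\tfrac1t\sum_{s=1}^t\E[\omega_{i,s}\omega_{i,s}^\top\mid\mathcal{F}_{s-1}]\to S$ together with a conditional Lindeberg / uniform-integrability condition. Here stationarity and ergodicity of $\{\omega_{i,s}\}$ reduce the first requirement again to the ergodic theorem, while the assumed finiteness of the second moments (combined with stationarity) supplies the needed uniform integrability; since this is precisely the content of the cited assumption set~\cite[Assumption~2.5]{GVK313736715}, in a self-contained write-up I would invoke the corresponding proposition of~\cite{GVK313736715} rather than reprove the CLT from scratch. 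Everything else is routine bookkeeping.
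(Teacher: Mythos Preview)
The paper's own ``proof'' of this lemma is a bare citation: it simply reads ``See, e.g.,~\cite[Prop.~2.1]{GVK313736715} and~\cite[Chap.~8, 9, 11]{ljung1999system}.'' Your proposal is precisely the standard argument that underlies that cited proposition --- the decomposition $\sqrt{t}(\hait-\xi_{i,\star})=(\tfrac1t\sum\tvis\tvis^\top)^{-1}\cdot\tfrac1{\sqrt t}\sum\omega_{i,s}$, ergodic LLN on the first factor, martingale CLT on the second, Slutsky to combine --- so there is no substantive difference in approach; you have simply unpacked what the paper delegates to a reference. Your sketch is correct and, if anything, more informative than the paper here.
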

\begin{proof}
  See, e.g.,~\cite[Prop.~2.1]{GVK313736715} and~\cite[Chap.~8, 9, 11]{ljung1999system}.
\end{proof}

Result~\eqref{eq:convergence_distribution} implies that the random variable 
$\sqrt{t}(\hait-\xi_{i,\star})$ is asymptotically normal distributed, 
and that $\|\hait-\xi_{i,\star}\| \to 0$ with rate $O(1/\sqrt{t})$. 
That is, the rate $O(1/\sqrt{t})$ is the asymptotical rate bound for (R)LS, 
and this will help us show that the estimation length $c_U = O(\sqrt{T})$.

The assumptions in Lemma~\ref{lemma:LS} require some words when applied to our setting.
Since the regressors $\tvis$ are determined by the gradient tracking process, and 
ultimately (upon convergence) they are close to the optimizer trajectory, we are requiring 
that the optimizers $\{\x_{\star}(t)\}$: (i) eventually behave as a stationary and 
ergodic process, and (ii) are never exactly the same (so that $\Sigma_{xx}$ remains non-singular).
In practice in our model the optimizers change in time due to external, time-varying data-streams 
(which could be assumed stationary and ergodic) and, thus, satisfy this assumption.

\subsection{Dynamic Gradient Tracking} \label{sec:dyn_GT}

The step in~\eqref{eq:gradient_tracking_table} is meant to implement
a gradient tracking distributed algorithm tailored for an online optimization 
problem, whose convergence is provided next.
\begin{theorem}\label{thm:GT_convergence}
  Consider the sequence $\{\xit\}_{t\ge 1}$ generated by~\eqref{eq:gradient_tracking_table} 
  and let $\avxt\triangleq \frac{1}{N}\sum_{i=1}^N \xit$. 
  Let Assumptions~\ref{assumption:functions},~\ref{assumption:bounded_variations}, 
  and~\ref{assumption:communication} hold. Choose a $\mu > 1$. Then, there exist a $\rho<1$ and a small enough step-size $\alpha$ in $(0, N/L]$, for which the following holds with high probability
  \begin{align*}
    \limsup_{t \to \infty} 
    \sum_{i=1}^N \hf_i(\avxt;t) - \hfstart 
    = 
    \frac{L (N \bUB^2 + \cUB^2)}{2 (1-\rho)^2} =: \frac{L}{2}\bar{c}^2 %
  \end{align*}
  with linear rate $\rho$. The consensus metric $C_{T}$
  satisfies
   $\limsup_{T \to \infty} 
    C_{T}(\{x_{i,T}\}_{i=1}^N, \bar{x}_{T}) 
    = 
    \bar{c}^2$, and the average $\bar{x}_{T}$ is bounded.

  ~\oprocend
\end{theorem}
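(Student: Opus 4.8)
The plan is to recognize the gradient-tracking block~\eqref{eq:gradient_tracking_table} as a perturbed centralized gradient descent on the average iterate coupled to two consensus-type errors, and to close the analysis with a small-gain argument on a $3\times3$ linear recursion. All statements are made on the tail $t\ge\bar t$ and conditioned on the probability-$(1-\delta)$ event of Assumption~\ref{assumption:bounded_variations}(i), on which $\hf(\cdot;t)$ is $m$-strongly convex and $L$-smooth with $m=Nm_V$, $L=NL_V+\mu\sum_i L_i$, and each $\hf_i(\cdot;t)$ is $(L_V+\mu L_i)$-smooth. \emph{Step~1 (aggregate dynamics).} Stacking $\bxt=\col(x_{1,t},\dots,x_{N,t})$, $\bdt=\col(d_{1,t},\dots,d_{N,t})$ and using that $\WW$ is doubly stochastic (Assumption~\ref{assumption:communication}), averaging~\eqref{eq:d_update} gives $\avdt=\avdtm+\frac1N\sum_i(\git-\gitm)$; with the initialization $d_{i,0}=\nabla\hf_i(x_{i,0};0)$ an induction yields the exact tracking identity $\avdt=\avgt:=\frac1N\sum_i\nabla\hf_i(\xit;t)$. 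Averaging~\eqref{eq:x_update} then gives $\avxt=\avxtm-\alpha\avgtm$, a centralized step whose gradient differs from $\frac1N\nabla\hf(\avxtm;t-1)$ by at most $\tfrac{L}{N}$ times the consensus error, by smoothness of the $\hf_i$.

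\emph{Step~2 (coupled error recursion).} Introduce $e^x_t:=\|\bxt-\one\avxt\|$, $e^d_t:=\|\bdt-\one\avdt\|$ and $e^\star_t:=\|\avxt-\hxstart\|$. Using (i) the $\ell_2$-contraction of $\WW-\tfrac1N\one\one^\top$ on the consensus-orthogonal subspace with factor $\sigma<1$; (ii) the strong-convexity/smoothness descent inequality applied to $\hf(\cdot;t-1)$ at its minimizer $\hxstartm$ for $\alpha\le N/L$, together with the drift $\|\hxstart-\hxstartm\|\le\cUB$ from Assumption~\ref{assumption:bounded_variations}(ii); (iii) inequality~\eqref{eq:mu} to bound $\|\git-\gitm\|$, which — after also bounding the increment $\|\xit-\xitm\|$ via~\eqref{eq:x_update} and $\|\avgtm\|\lesssim\tfrac LN(e^x_{t-1}+e^\star_{t-1})$ — couples $e^d_t$ back to $e^x_{t-1},e^d_{t-1},e^\star_{t-1}$, picking up the constant $\bUB$; one obtains, for $\Vt:=\col(e^x_t,e^d_t,e^\star_t)$, a recursion $\Vt\le A_\alpha\Vtm+\bt$ with $A_\alpha$ entrywise nonnegative and $\bt$ bounded entrywise by a constant vector $\bar b$ with entries $O(\sqrt N\,\bUB)$ and $O(\cUB)$. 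The $(e^x,e^d)$ block contracts at rate $\approx\sigma+O(\sqrt\alpha)$ and the $e^\star$ component at rate $1-\alpha m/N$, while every cross-coupling into $e^\star$ is $O(\alpha)$; hence for $\alpha$ small enough (possibly after rescaling the error coordinates and using a weighted norm) the spectral radius $\rho:=\rho(A_\alpha)$ is $<1$.

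\emph{Step~3 (limit and conversions).} Iterating from $t=\bar t$, $\Vt\le A_\alpha^{\,t-\bar t}\Vtb+\sum_{k\ge0}A_\alpha^{\,k}\bar b$, so the transient vanishes at linear rate $\rho$ and $\limsup_t\Vt\le(I-A_\alpha)^{-1}\bar b$. Propagating the constants through $(I-A_\alpha)^{-1}$ gives a common bound $\limsup_t e^x_t,\ \limsup_t e^\star_t\le\bar c$ with $\bar c=\sqrt{N\bUB^2+\cUB^2}/(1-\rho)$. Then $L$-smoothness of $\hf(\cdot;t)$ at its minimizer yields $\sum_i\hf_i(\avxt;t)-\hfstart\le\tfrac L2\|\avxt-\hxstart\|^2=\tfrac L2(e^\star_t)^2$, hence $\limsup_t(\sum_i\hf_i(\avxt;t)-\hfstart)\le\tfrac L2\bar c^2$; likewise $C_T=\|\bxt-\one\avxt\|^2=(e^x_t)^2$ gives $\limsup_T C_T\le\bar c^2$; and $\|\avxt\|\le e^\star_t+\|\hxstart\|<\infty$, with $\|\hxstart\|$ finite since the minimizer of the convex $\hf(\cdot;t)$ exists and is finite by Assumptions~\ref{assumption:functions}--\ref{assumption:bounded_variations}, gives boundedness of $\avxt$. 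For $t<\bar t$ the iterates and gradients remain finite because $\hUit$ is bounded on bounded sets (Assumption~\ref{assumption:bounded_variations}), so the tail analysis is unaffected, and all conclusions hold with probability at least $1-\delta$.

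\emph{Main obstacle.} The delicate point is Step~2: inequality~\eqref{eq:mu} makes the gradient-tracking error genuinely depend on the optimality gap $e^\star$ and, through $\|\xit-\xitm\|$, non-trivially on the consensus error, so the time variation cannot be treated as purely exogenous forcing. One must show these extra couplings are small enough — $O(\alpha)$ into $e^\star$, and around the $e^x$--$e^d$ loop compensated by the $O(\alpha)$ gain from $e^d$ into $e^x$ — for $A_\alpha$ to remain a contraction with the stated $\rho<1$, and to track the fixed-point constants carefully enough to land exactly on $\tfrac{L}{2}\bar c^2$ with $\bar c^2=(N\bUB^2+\cUB^2)/(1-\rho)^2$. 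The remaining ingredients — the descent inequality, the Neumann-series/Perron--Frobenius manipulation of the linear recursion, and the smoothness-based conversion to function values and consensus — are standard bookkeeping of Lipschitz constants.
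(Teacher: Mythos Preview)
Your proposal is correct and follows essentially the same route as the paper: the same three error quantities are assembled into a nonnegative $3\times3$ linear recursion $\Vt\le A(\alpha)\Vtm+B\Zt$ with bounded forcing $(\cUB,\sqrt N\,\bUB)$, contraction for small $\alpha$ is established (the paper does this by writing $A(\alpha)=A(0)+\alpha A'$ and applying an eigenvalue-perturbation result to the simple eigenvalue $1$ of $A(0)$, rather than a weighted-norm argument, but the conclusion is identical), the recursion is iterated from $\bar t$, and the conversion to function values and to $C_T$ uses $L$-smoothness at the minimizer exactly as you describe. Your identification of the main obstacle---the $O(1)$ coupling from $e^\star$ into $e^d$ via~\eqref{eq:mu}, which prevents treating the time variation as purely exogenous---matches precisely where the paper's matrix $A(\alpha)$ has its nontrivial off-diagonal term $2L\sqrt N$ and hence where the perturbation argument is needed.
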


The proof of Theorem~\ref{thm:GT_convergence} is given in Appendix~\ref{app:dynamic_GT}. %
The result is in line with current works in time-varying optimization~\cite{dall2019optimization, Simonetto2020}, as well as regret results with dynamic comparators when the path length grows as $O(T)$ and we employ a constant step-size~\cite{shahrampour2017distributed}.

\subsection{Regret Analysis of Algorithm~\ref{algo}} \label{sec:regret}

The next theorem, whose proof is reported in Appendix~\ref{app:regret}, represents the second main 
result of this paper. 
It shows that a bound on the cumulative regret can be provided under suitable assumptions, 
and that the asymptotic average regret is bounded.
\begin{theorem}\label{thm:regret}
  Let the sequences $\{(\tvit, \yit)\}_t$ be generated by Algorithm~\ref{algo}. 
  Let Assumptions~\ref{assumption:functions},~\ref{assumption:bounded_variations}(ii) 
  and~\ref{assumption:communication} hold. 
  Choose a $\mu > 1$. Then, there exist a $\rho<1$ and a small enough step-size $\alpha$ in $(0, N/L]$, 
  for which w.h.p.
  \begin{multline*}
   R_T(\{\avxt\}_{t=1}^T)
    \le 
    O(1) + O (c_U) + O\left( T\, \frac{L (N \bUB^2 + \cUB^2)}{2 (1-\rho)^2}\right).
  \end{multline*}
  Moreover, w.h.p., the average dynamic regret reaches an asymptotical value as
  \begin{align*}
  \limsup_{T\to\infty} \, \frac{R_T(\{\avxt\}_{t=1}^T )}{T} = O\left(\frac{c_U}{T}\right) + \frac{L}{2}\bar{c}^2 = O(1).%
  \end{align*} 
  Finally, w.h.p., the consensus metric $C_{T}$ is such that $\limsup_{T \to \infty} 
    C_{T}(\{x_{i,T}\}_{i=1}^N, \bar{x}_{T}) 
    = 
    \bar{c}^2$.
\oprocend
\end{theorem}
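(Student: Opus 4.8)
The plan is to measure the regret against the \emph{estimated} problems $\hf(\cdot;t)$ and to charge the discrepancy with the true problems to the learning error. For each $t$ I insert and subtract $\sum_i\hf_i(\avxt;t)$ and $\hfstart$:
\begin{align*}
  \sum_{i=1}^N f_i(\avxt;t) - \fstart
  &= \underbrace{\sum_{i=1}^N\bigl(U_i(\avxt) - \hUit(\avxt)\bigr)}_{A_t}
     + \underbrace{\bigl(\hfstart - \fstart\bigr)}_{C_t} \\
  &\quad + \underbrace{\Bigl(\sum_{i=1}^N \hf_i(\avxt;t) - \hfstart\Bigr)}_{B_t}.
\end{align*}
Here $B_t\ge 0$ is exactly the optimality gap of the estimated problem, the quantity controlled by Theorem~\ref{thm:GT_convergence}, while $A_t$ and $C_t$ are driven by the pointwise estimation error $|\hUit(\cdot)-U_i(\cdot)|$. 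I would run the whole argument on the high-probability event of Assumption~\ref{assumption:bounded_variations}(i): on it, for every $t\ge\bar{t}$ the cost $\hf(\cdot;t)$ is $m$-strongly convex and $L$-smooth, so $\hfstart$ and $\hxstart$ are well defined and finite and the hypotheses of Theorem~\ref{thm:GT_convergence} hold.

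\textbf{Tracking part.} I would reopen the proof of Theorem~\ref{thm:GT_convergence}, which in fact establishes (for $t\ge\bar{t}$) a one-step linear contraction $e_t \le \rho\,e_{t-1} + c$ on a Lyapunov-type error $e_t$ that dominates $B_t$ up to the smoothness constant, with fixed point $c/(1-\rho)$ equal to $\tfrac{L}{2}\bar{c}^2$. Unrolling from $\bar{t}$ gives $e_t \le \rho^{\,t-\bar{t}}e_{\bar{t}} + \tfrac{L}{2}\bar{c}^2$, hence $\sum_{t=\bar{t}}^T B_t = O(1) + O\!\bigl(T\,\tfrac{L}{2}\bar{c}^2\bigr)$. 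The transient $e_{\bar{t}}$ is finite because the RLS and gradient-tracking recursions cannot diverge in finitely many steps (so $\avxt$ stays bounded for $t\le\bar{t}$) and $\hx_\star(\bar{t})$ is bounded on the above event. The $\bar{t}-1$ terms with $t<\bar{t}$, together with the corresponding $A_t$ and $C_t$, add only $O(1)$, again by finite-time boundedness of the iterates and finiteness of $\fstart$ (Assumption~\ref{assumption:functions}(iv)).

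\textbf{Learning part.} All the points at which the mismatch is evaluated lie in a fixed bounded set: $\avxt$ is bounded by Theorem~\ref{thm:GT_convergence}, $\xstart$ by Assumption~\ref{assumption:functions}(iv), and $\hxstart$ once $\hf(\cdot;t)$ is convex. By optimality, $\hfstart \le \hf(\xstart;t) = \fstart + \sum_i(\hUit(\xstart)-U_i(\xstart))$ and, symmetrically, $\fstart \le \hfstart + \sum_i(U_i(\hxstart)-\hUit(\hxstart))$, so $|C_t| \le \sum_i\max\{|\hUit(\xstart)-U_i(\xstart)|,|\hUit(\hxstart)-U_i(\hxstart)|\}$; likewise $|A_t|\le\sum_i|\hUit(\avxt)-U_i(\avxt)|$. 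Upgrading the estimation-error length to a bound uniform over these (time-varying but uniformly bounded) evaluation points then yields $\sum_{t=1}^T(|A_t|+|C_t|) = O(c_U)$. Adding the three pieces gives $R_T \le O(1) + O(c_U) + O\!\bigl(T\,\tfrac{L(N\bUB^2+\cUB^2)}{2(1-\rho)^2}\bigr)$; dividing by $T$ and using $O(1)/T\to 0$ gives $\limsup_T R_T/T = O(c_U/T) + \tfrac{L}{2}\bar{c}^2 = O(1)$. The consensus statement is inherited verbatim from Theorem~\ref{thm:GT_convergence}, since the $x_{i,t}$-updates of Algorithm~\ref{algo} are precisely those analyzed there.

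\textbf{Main obstacle.} I expect the work to concentrate in two places. First, Theorem~\ref{thm:GT_convergence} as stated only gives the $\limsup$ of $B_t$, so one must retrace its proof to extract an explicit per-iteration contraction, keeping track of both the transient $e_{\bar{t}}$ and the constant that feeds $\tfrac{L}{2}\bar{c}^2$; this is the crux. Second, the estimation-error length $c_U$ is defined at a single point, so one must make the bounds on $A_t$ and $C_t$ genuinely uniform over the bounded orbit $\{\avxt,\xstart,\hxstart\}$, and separately dispose of the regime $t<\bar{t}$ where $\hf(\cdot;t)$ need not be convex and $\hxstart$ need not exist. Once these are pinned down, the remaining estimates are routine.
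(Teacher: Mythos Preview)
Your proposal is correct and follows essentially the same route as the paper: the same decomposition into a tracking term $B_t=\sum_i\hf_i(\avxt;t)-\hfstart$ and learning terms $A_t,C_t$, the same reopening of the proof of Theorem~\ref{thm:GT_convergence} to extract the per-step contraction~\eqref{con:lin} and sum it, and the same handling of the pre-$\bar{t}$ regime by finite-time boundedness. The only cosmetic difference is that the paper bounds $C_t$ one-sidedly via $\hf_\star(t)\le\hf(\xstart;t)$ together with $(I)=\hf(\xstart;t)-\hf_\star(t)\ge0$, so only the evaluation point $\xstart$ (not $\hxstart$) enters the learning bound; your two-sided estimate is slightly more than needed but harmless.
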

Algorithm~\ref{algo} delivers a bounded average 
dynamic regret with high probability. In particular, the dynamic regret is composed  of three terms.
The first $O(1)$ term collects the initialization errors (e.g., when $\hUit$ is nonconvex).
The second $O(c_U)$ term, more standard, represents the learning bound. 
(It is in general $O(T)$, but $O(\sqrt{T})$ if the assumptions of Lemma~\ref{lemma:LS} are 
verified, see Appendix~\ref{ap:RLS:prop} (Lemma~\ref{lemma:RLS}), thereby 
vanishing as $O(1/\sqrt{T})$ in the average regret result).
Finally, the third $O(T)$ term pertains the tracking of the distributed solution trajectory, and 
it is linear in $T$ since the path length is linear in $T$~\cite{shahrampour2017distributed}. 
The asymptotical bound depends on how fast the problems are changing in time, 
due to variations of the gradients and the optimizers, as typical in time-varying optimization. 
Finally, note that Assumption~\ref{assumption:bounded_variations}(i) is not required here, since it is verified for our RLS scheme~[Cf. Appendix~\ref{ap:RLS:prop}].

\begin{remark}[Regret in a distributed setting]
  Under boundedness of the consensus metric $C_T$
  given by Theorem~\ref{thm:regret},
  an agent $j$-specific regret bound $\sum_{t=1}^T \sum_{i=1}^N f_i (x_{j,t}; t)  - \fstart $ 
  can also be derived, with the same convergence rate, and leading term 
  of $O(c_U) + O(2 T \bar{c}^2)$ 
  (Cf.~\cite[Appendix~\ref{extra-bound}]{notarnicola2020personalized_arxiv}).\oprocend  
\end{remark}

\subsection{Computational and communication complexity}\label{sec:comm}

We finish our analysis of Algorithm~\ref{algo} by reporting its computational and communication complexity. First, only local computations are carried out, and the most demanding are matrix/vector multiplications on vector $\tvit \in \real^{1+n+n^2}$, delivering a computational complexity of $O(n^4)$. This is in comparison with Gaussian Processes $O(t^3)$ and convex regression $O(t^2 n^3)$ [Cf. Footnote~\ref{ft:cc}]. This makes our method less computational intensive than other techniques, especially for large $t \gg n$ (i.e., as more and more data comes in). This is due to the fact that our method is recursive. 

As for the communication complexity, our gradient tracking employs two communication rounds for each iteration for a total of at worst $4 (N-1) n$ scalar sent.

\section{Numerical Example}\label{sec:example}

We consider a scenario with  both $V_i$ and $U_i$ quadratic, i.e.,
\begin{align*} %
    \m_{x\in\R^3}
    \: &\: \sum_{i=1}^N \Big(\underbrace{\|x - p_i(t)\|^2}_{V_i(x;t)} + \underbrace{\|x- v_i\|^2}_{U_i(x)}\Big), \qquad t \geq 0.
\end{align*}
with $p_i(t) \in \R^3$ for all $t$ and $v_i\in\R^3$.

We implemented the Personalized Gradient Tracking Algorithm~\ref{algo} with DISROPT~\cite{farina2019disropt} 
and performed a simulation with $N=30$ agents, in which each target speed $p_i(t)$ evolves according to the following law 
\begin{align*}
  p_i(t) = z_i + \psi_i \sin(t/m_i)
\end{align*}
with $z_i\in\R^3$, $\psi_i \in \R^3$ and $m_i>1$. 
We randomly generate the coefficients by picking
$v_i \in \mathcal{U}[-1.5, 1.5]^3$, 
$z_i \in \mathcal{U}[-5, 5]^3$, 
$m_i\in \mathcal{U}[100,150] \cap \natural$, 
$\psi_i\in\mathcal{U}[0.5, 0.6]$ 
and $\epsilon_{i,t}\in \mathcal{N}(0, 0.2)$ for all $i=1,\dots, N$. 
We ran $10^6$ iterations with step-size $\alpha = 0.01$ 
and initial conditions $x_{i,0}\in\mathcal{U}[-1.5, 1.5]^2$. %
The evolution of the average regret $R_t/t$ obtained by Algorithm~\ref{algo} is 
shown in Figure~\ref{fig:regret}. Specifically, we evaluate the dynamic regret 
as expressed in~\eqref{eq:regret} at $\avxt\triangleq \frac{1}{N}\sum_{i=1}^N \xit$ 
for all $t = 1,\ldots, 10^6$.
As expected from Theorem~\ref{thm:regret}, the average regret decays to some constant value.
\begin{figure}[!htbp]
\centering
     \includegraphics[width=0.9\columnwidth]{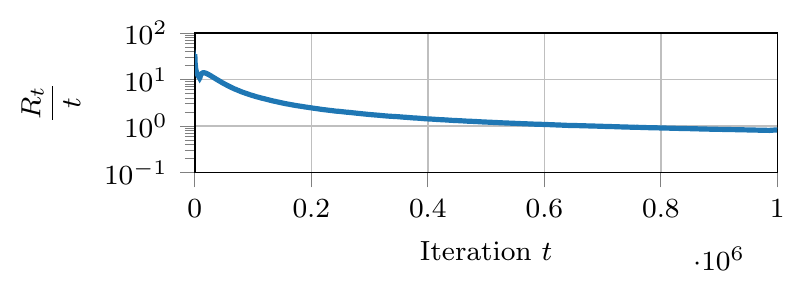}
     \caption{Evolution of the average regret.}
     \label{fig:regret}
\end{figure}

Figure~\ref{fig:consensus_tracking_error} shows the 
consensus and tracking error. In particular, it can be appreciated that they become 
stationary, though not vanishing, after the initial transient highlighted in the insets, consistently 
with the theoretical bound proved by~\eqref{con:lim}.
\begin{figure}[!htbp]
\centering
     \includegraphics[scale=.9]{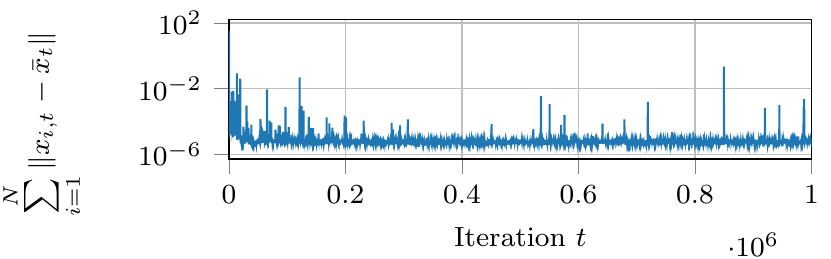}
     \includegraphics[scale=0.9]{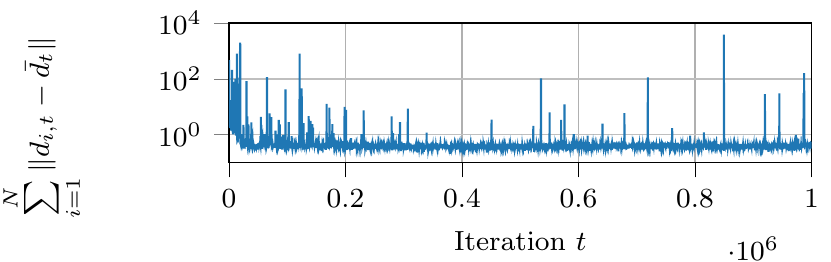} %
     \caption{Evolution of the consensus error (top) and the tracking error (bottom).} %
     \label{fig:consensus_tracking_error}
\end{figure}
\vskip-.5cm
\section{Conclusions}
 In this paper, we addressed the problem of solving in a distributed way an online optimization problems 
 in which the local cost functions are composed by a known and an unknown part. 
 We proposed an algorithm that concurrently tracks the solution of the problem and estimates 
 the parameters of the unknown portion of the objective function. 
 Finally, we showed that a bounded (possibly vanishing) average regret is achieved under 
 suitable assumptions. A numerical example is provided to corroborate the theoretical results.

\bibliographystyle{IEEEtran}
\bibliography{biblio_personalized_online}

\appendix

\section{Appendix}
Let $\avxt \triangleq \frac{1}{N}\sum_{i=1}^N \xit$, 
$\avgt \triangleq  \frac{1}{N}\sum_{i=1}^N \nabla \hf_i(\avxt;t)$
and 
$\avdt \triangleq  \frac{1}{N}\sum_{i=1}^N \dit$ be
the averages of the local quantities in~\eqref{eq:gradient_tracking_table}
for all $t\ge0$. Through simple manipulations, we obtain
\begin{align}\label{eq:xy_avg_update}
    \avxt \!=\! \avxtm \!-\! \alpha\avdtm, \quad 
    \avdt \!=\! \avdtm \!+ \!\frac{1}{N} \sum_{i=1}^N \left( \git\! - \!\gitm\right).%
\end{align}
By exploiting the (column) stochasticity of the weights (cf.~Assumption~\ref{assumption:communication}), 
and the initialization $d_i^0=\nabla \hf_i(x_i^0; 0)$ it can be shown that, for all $t\geq 0$,
\begin{align}\label{eq:conservation_y}
    \avdt = \frac{1}{N} \sum_{i=1}^N \git = \frac{1}{N} \sum_{i=1}^N \nabla \hf_i ( \xit , t). 
\end{align}

Moreover, letting $\bxt \triangleq \col( x_{1,t},\dots,x_{N,t})$, 
$\bdt \triangleq \col (d_{1,t},\dots,d_{N,t} )$ and $\bgt \triangleq \col ( g_{1,t}, \dots,g_{N,t} )$,
algorithm~\eqref{eq:gradient_tracking_table} can be restated as
\begin{align}\label{eq:xy_update_stack}
    \bxt \!=\! W \bxtm \!- \!\alpha \bdtm,\quad
    \bdt\! =\! W \bdtm + ( \bgt\! -\! \bgtm).
    \end{align}
where $W \triangleq \WW \otimes I_n$ with $\otimes$ denoting the Kronecker product.

\subsection{Intermediate  Results}
The analysis relies on properties of
    the consensus error $\|\bxt \!-\! \one \avxt \|$,
  the tracking error $\|\bdt \!-\! \one \avdt \|$
  and   the optimality error $\|\avxt \!-\! \xstart \|$.
This is captured in the next lemmas whose proofs are
provided later.

\begin{lemma}\label{lemma:consensus_error}
    Let assumption~\ref{assumption:communication} hold. Then,
    for all $t \ge 0$,
    \begin{align*}
      \|\bxt - \one \avxt \|
      \leq 
      \sigma_{W}
      \| \bxtm -\one \avxtm \| + \alpha \|\bdtm - \one \avdtm \|
    \end{align*}
    where $\sigma_W$ be the spectral radius of $W-\tfrac{1}{N} \one\one$.\oprocend
\end{lemma}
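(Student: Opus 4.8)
The plan is to prove Lemma~\ref{lemma:consensus_error} by a direct computation starting from the stacked update~\eqref{eq:xy_update_stack}, exploiting the fact that $W = \WW \otimes I_n$ inherits double stochasticity from $\WW$ (Assumption~\ref{assumption:communication}), so that $W\one = \one$ and $\one^\top W = \one^\top$ (understood blockwise), and the averaging projector $J \triangleq \tfrac{1}{N}\one\one^\top \otimes I_n$ satisfies $JW = WJ = J$ and $J^2 = J$. First I would write the consensus error at time $t$ as $\bxt - \one\avxt = (I - J)\bxt$ and substitute $\bxt = W\bxtm - \alpha\bdtm$, obtaining
\begin{align*}
  \bxt - \one\avxt = (I-J)(W\bxtm - \alpha\bdtm) = (W - J)\bxtm - \alpha (I-J)\bdtm,
\end{align*}
where I used $(I-J)W = W - JW = W - J$.

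Next I would insert $\one\avxtm$ for free: since $(W-J)\one\avxtm = (W-J)\one \avxtm = 0$ (because $W\one = \one$ and $J\one = \one$), we get $(W-J)\bxtm = (W-J)(\bxtm - \one\avxtm)$. Similarly $(I-J)\bdtm = (I-J)(\bdtm - \one\avdtm)$ since $(I-J)\one = 0$. Hence
\begin{align*}
  \bxt - \one\avxt = (W-J)(\bxtm - \one\avxtm) - \alpha (I-J)(\bdtm - \one\avdtm).
\end{align*}
Taking norms, applying the triangle inequality, and using $\|(W-J)v\| \le \sigma_W \|v\|$ for any $v$ (this is the definition of $\sigma_W$ as the spectral radius/operator norm of $W - \tfrac{1}{N}\one\one^\top$, noting $W-J$ is symmetric so spectral radius equals operator norm) together with $\|(I-J)v\| \le \|v\|$ (orthogonal projection is nonexpansive), yields exactly the claimed bound
\begin{align*}
  \|\bxt - \one\avxt\| \le \sigma_W \|\bxtm - \one\avxtm\| + \alpha \|\bdtm - \one\avdtm\|.
\end{align*}

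I do not expect a genuine obstacle here; the only point requiring a little care is the justification that $\|(W - J)v\| \le \sigma_W \|v\|$, i.e., that the relevant operator norm coincides with the spectral radius. This holds because $\WW$ doubly stochastic does \emph{not} in general make $W$ symmetric, so strictly one should either (i) assume, as is standard and as the statement's definition of $\sigma_W$ implicitly does, that we work with the symmetric part or a symmetrizable weight matrix, or (ii) more robustly, note that the relevant contraction is on the subspace orthogonal to $\one$ and invoke that $\sigma_W < 1$ is precisely the quantity controlling this contraction for the given (possibly directed but balanced) graph. For the purposes of this lemma I would simply take $\sigma_W \triangleq \|W - \tfrac{1}{N}\one\one^\top\|_2$ as stated and the inequality is then immediate; the fact that $\sigma_W < 1$ under Assumption~\ref{assumption:communication} is what gets used later in Theorem~\ref{thm:GT_convergence}.
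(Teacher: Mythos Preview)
Your proof is correct and follows essentially the same route as the paper: the paper substitutes $\bxt = W\bxtm - \alpha\bdtm$ and $\avxt = \avxtm - \alpha\avdtm$ directly to obtain $\bxt - \one\avxt = (W\bxtm - \one\avxtm) - \alpha(\bdtm - \one\avdtm)$, then applies the triangle inequality and the contraction $\|W\bxtm - \one\avxtm\| \le \sigma_W\|\bxtm - \one\avxtm\|$, which is exactly your argument with the projector $J$ made explicit. Your remark that, for a merely doubly-stochastic (possibly nonsymmetric) $\WW$, one should read $\sigma_W$ as the operator norm $\|W-\tfrac{1}{N}\one\one^\top\|_2$ rather than the spectral radius is a valid clarification that the paper leaves implicit.
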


\begin{lemma}\label{lemma:optimality_error}
  Let Assumptions~\ref{assumption:functions},~\ref{assumption:bounded_variations},~\ref{assumption:communication} hold. Then, for $t\geq \bar{t}$ and with probability $1-\delta$:
  \begin{align*}\!
    \|\avxt \! -\! \hxstart \|\!
    \leq\!
    \theta\|\avxtm \! -\! \hxstartm \|
    \! +\! 
    \alpha\tfrac{L}{\sqrt{N}}\|\bxtm \! -\! \one \avxtm \| 
    \! +\! 
    \cUB
  \end{align*}
  with $\theta=\max\{ |1- L \alpha/N|,|1-m \alpha/N|\}$.\oprocend
\end{lemma}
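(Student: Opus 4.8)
\textbf{Proof proposal for Lemma~\ref{lemma:optimality_error}.}

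The plan is to analyze the evolution of the optimality error $\|\avxt - \hxstart\|$ by unrolling the averaged recursion from~\eqref{eq:xy_avg_update} and~\eqref{eq:conservation_y}, and then inserting and subtracting the ``centralized'' gradient descent step applied at $\avxtm$. First I would write the averaged update as $\avxt = \avxtm - \alpha \avdtm$ and recall that, by~\eqref{eq:conservation_y}, $\avdtm = \tfrac{1}{N}\sum_i \nabla\hf_i(\xitm;t-1)$, which is \emph{not} the full gradient $\tfrac{1}{N}\nabla\hf(\avxtm;t-1)$ evaluated at the average, because the local arguments $\xitm$ differ from $\avxtm$. So the natural decomposition is
\begin{align*}
  \avxt - \hxstartm
  &= \avxtm - \hxstartm - \tfrac{\alpha}{N}\nabla\hf(\avxtm;t-1)
  \\ & \quad + \tfrac{\alpha}{N}\Big(\nabla\hf(\avxtm;t-1) - \textstyle\sum_i \nabla\hf_i(\xitm;t-1)\Big).
\end{align*}
Then I would append $\pm\hxstart$ to pass from $\hxstartm$ to $\hxstart$, picking up the term $\|\hxstart - \hxstartm\| \le \cUB$ from Assumption~\ref{assumption:bounded_variations}(ii).

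The core of the argument is the standard contraction estimate for a gradient step on an $m$-strongly convex, $L$-smooth function: once $t\ge\bar t$ and the high-probability event of Assumption~\ref{assumption:bounded_variations}(i) holds, $\hf(\cdot;t-1)$ is $m$-strongly convex and $L$-smooth with $m=Nm_V$, $L=NL_V+\mu\sum_i L_i$, and $\hxstartm$ is its minimizer, so $\nabla\hf(\hxstartm;t-1)=0$. Hence the map $z\mapsto z - \tfrac{\alpha}{N}\nabla\hf(z;t-1)$ is a contraction toward $\hxstartm$ with factor $\theta = \max\{|1-L\alpha/N|,|1-m\alpha/N|\}$ (this is exactly where $\alpha\in(0,N/L]$ guarantees $\theta<1$); applying this with $z=\avxtm$ gives $\|\avxtm - \hxstartm - \tfrac{\alpha}{N}\nabla\hf(\avxtm;t-1)\| \le \theta\|\avxtm-\hxstartm\|$. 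For the remaining ``gradient mismatch'' term I would bound $\|\nabla\hf(\avxtm;t-1) - \sum_i\nabla\hf_i(\xitm;t-1)\| = \|\sum_i(\nabla\hf_i(\avxtm;t-1)-\nabla\hf_i(\xitm;t-1))\| \le L_{\max}\sum_i\|\avxtm - \xitm\|$ by $L_i'$-smoothness of each $\hf_i$ (with $L_i' = L_V+\mu L_i$), then pass to the $2$-norm of the stacked vector via Cauchy--Schwarz, $\sum_i\|\avxtm-\xitm\| \le \sqrt N\,\|\bxtm - \one\avxtm\|$, and absorb the constants so that the coefficient is at most $\alpha L/\sqrt N$ (using $\sum_i L_i' \le L$ and $L_i' \le L$). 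Collecting the three pieces and applying the triangle inequality yields precisely the claimed recursion.

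The main obstacle — and the only genuinely non-routine point — is handling the fact that $\avdtm$ tracks the average of \emph{locally-evaluated} gradients rather than the gradient of the aggregate at the consensus point; this is what forces the $\alpha\tfrac{L}{\sqrt N}\|\bxtm-\one\avxtm\|$ coupling term and is the reason the lemma cannot stand alone but must be combined with Lemma~\ref{lemma:consensus_error}. A secondary subtlety is bookkeeping the high-probability qualifier: the contraction constant $\theta<1$ and the smoothness/strong-convexity constants $m,L$ are only valid on the event of Assumption~\ref{assumption:bounded_variations}(i) and for $t\ge\bar t$, so the statement must be asserted ``with probability $1-\delta$'' for $t\ge\bar t$, exactly as written. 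Everything else (Cauchy--Schwarz, triangle inequalities, the elementary scalar bound $\|I - \tfrac{\alpha}{N}H\|\le\theta$ for symmetric $H$ with spectrum in $[m,L]$) is routine and I would not belabor it.
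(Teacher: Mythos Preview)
Your proposal is correct and follows essentially the same route as the paper's proof: both insert and remove the ``centralized'' gradient $\tfrac{\alpha}{N}\nabla\hf(\avxtm;t-1)$ to isolate a gradient-descent contraction toward $\hxstartm$ with factor $\theta$, bound the mismatch $\|\avgtm-\avdtm\|$ via per-agent Lipschitz smoothness together with $\sum_i\|\avxtm-\xitm\|\le\sqrt{N}\,\|\bxtm-\one\avxtm\|$, and pick up $\cUB$ from $\|\hxstart-\hxstartm\|$. The only cosmetic differences are the order in which you add and subtract $\hxstartm$ versus the centralized gradient, and that the paper phrases the $\ell_1$--$\ell_2$ inequality as ``concavity of the square root'' where you invoke Cauchy--Schwarz.
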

\begin{lemma}\label{lemma:tracking_error}
  Let Assumptions~\ref{assumption:functions},~\ref{assumption:bounded_variations},
  \ref{assumption:communication} hold. 
  Then, for $t\geq \bar{t}$ and with probability $1-\delta$:
  \begin{align*}
    \|\bdt - \one \avdt \|
    &
    \leq (\sigma_W+\alpha L)\|\bdtm - \one \avdtm \|+\nonumber
    \\
    &\hspace{-8ex}+(L\|W-I\|+ 2L + \alpha L^2\sqrt{N})\|\bxtm- \one \avxtm \| \nonumber
    \\
    &\hspace{-8ex}+ (2L\sqrt{N}+\alpha L^2\sqrt{N}) \|\avxtm- \xstartm \|+\sqrt{N} \bUB.
    \eqoprocend
  \end{align*}
\end{lemma}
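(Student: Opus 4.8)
The plan is to project the stacked tracking recursion $\bdt = W\bdtm + (\bgt - \bgtm)$ (cf.~\eqref{eq:xy_update_stack}) onto the disagreement subspace and then to control the gradient increment $\bgt - \bgtm$ through the smoothness of the estimated costs. Let $J \triangleq \tfrac{1}{N}\one\one^\top \otimes I_n$, so that $\one\avdt = J\bdt$. Since $\WW$ is doubly stochastic (Assumption~\ref{assumption:communication}), a direct Kronecker computation gives $WJ = JW = J$, hence $(I-J)W = W-J$ and $(W-J)J = 0$, so that
\begin{align*}
  \bdt - \one\avdt = (W-J)(\bdtm - \one\avdtm) + (I-J)(\bgt - \bgtm).
\end{align*}
Taking norms, with $\sigma_W$ as in Lemma~\ref{lemma:consensus_error} and $\|I-J\| \le 1$, the first term contributes $\sigma_W\|\bdtm - \one\avdtm\|$, and it remains to bound $\|\bgt - \bgtm\|$. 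From here on I would work on the probability-$(1-\delta)$ event of Assumption~\ref{assumption:bounded_variations}(i), on which, for $t\ge\bar t$, each $\hf_i(\cdot;t)$ is $L$-smooth (using $L \ge L_V+\mu L_i$) and the bound~\eqref{eq:mu} holds; recall also that $\git = \nabla\hf_i(\xit;t)$ by~\eqref{eq:g_update}, $\hPit$ being symmetric.

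For each agent I would split $\git - \gitm = [\nabla\hf_i(\xit;t) - \nabla\hf_i(\xitm;t)] + [\nabla\hf_i(\xitm;t) - \nabla\hf_i(\xitm;t-1)]$, bounding the first bracket by $L\|\xit - \xitm\|$ and the second by $2L\|\xitm - \hxstart\| + \bUB$ via~\eqref{eq:mu}. Summing squares over $i$, applying the $\ell^2$ triangle inequality, and using $\big(\sum_i\|\xitm - \hxstart\|^2\big)^{1/2} \le \|\bxtm - \one\avxtm\| + \sqrt N\,\|\avxtm - \hxstart\|$ leads to
\begin{align*}
  \|\bgt - \bgtm\| \le L\|\bxt - \bxtm\| + 2L\|\bxtm - \one\avxtm\| + 2L\sqrt N\,\|\avxtm - \hxstart\| + \sqrt N\,\bUB.
\end{align*}
I would then expand $\|\bxt - \bxtm\|$ from the stacked update $\bxt = W\bxtm - \alpha\bdtm$: since $(W-I)\one\avxtm = 0$, one has $\|\bxt - \bxtm\| \le \|W-I\|\,\|\bxtm - \one\avxtm\| + \alpha\|\bdtm\|$, and $\|\bdtm\| \le \|\bdtm - \one\avdtm\| + \sqrt N\,\|\avdtm\|$.

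The crux is to bound $\|\avdtm\|$, which is not a priori finite. Here I would invoke the gradient-tracking conservation identity~\eqref{eq:conservation_y}, $\avdtm = \tfrac{1}{N}\sum_i \nabla\hf_i(\xitm;t-1)$, together with the optimality condition $\sum_i \nabla\hf_i(\hxstart;t) = 0$ and Assumption~\ref{assumption:bounded_variations}(ii) (giving $\|\sum_i[\nabla\hf_i(\hxstart;t) - \nabla\hf_i(\hxstart;t-1)]\| \le N\bUB$), writing $\avdtm = \tfrac{1}{N}\sum_i[\nabla\hf_i(\xitm;t-1) - \nabla\hf_i(\hxstart;t-1)] + \tfrac{1}{N}\sum_i\nabla\hf_i(\hxstart;t-1)$ and bounding, via $L$-smoothness, $\|\avdtm\| \le \tfrac{L}{\sqrt N}\|\bxtm - \one\avxtm\| + L\,\|\avxtm - \hxstart\| + \bUB$. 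Substituting this into the estimate for $\|\bxt - \bxtm\|$, then into that for $\|\bgt - \bgtm\|$, and collecting the coefficients of $\|\bdtm - \one\avdtm\|$, $\|\bxtm - \one\avxtm\|$, $\|\avxtm - \hxstart\|$ and the additive constant reproduces the stated recursion (the reference optimizer denoted $\xstartm$ being $\hxstart$; one may pass to $\hxstartm$ through $\|\avxtm - \hxstart\| \le \|\avxtm - \hxstartm\| + \cUB$, and $O(\alpha)$ and $O(\cUB)$ corrections are absorbed into the stated constants). I expect this control of $\|\avdtm\|$ to be the main obstacle: one must use that the auxiliary variables $\dit$ average to the true aggregate gradient of $\hat f(\cdot;t-1)$ and combine this with the optimality condition and the bounded-drift Assumption~\ref{assumption:bounded_variations}(ii) to turn a potentially unbounded term into consensus- and optimality-error contributions; everything else is careful but routine bookkeeping, valid on the probability-$(1-\delta)$ event and for $t\ge\bar t$.
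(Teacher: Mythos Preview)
Your argument is correct and mirrors the paper's proof almost step for step: both project the tracking recursion onto the disagreement subspace, split the gradient increment via $L$-smoothness and~\eqref{eq:mu}, expand $\|\bxt-\bxtm\|$ from the $x$-update, and control $\|\avdtm\|$ via the conservation identity~\eqref{eq:conservation_y} together with optimality. The only cosmetic difference is that the paper centers the $\|\avdtm\|$ bound directly at $\hxstartm$ (where $\sum_i\nabla\hf_i(\cdot;t-1)$ vanishes), which delivers the stated constants exactly and avoids the extra $\bUB$ and $\cUB$ corrections you introduce and then absorb.
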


\subsection{Proof of Theorem~\ref{thm:GT_convergence}}
\label{app:dynamic_GT}
Let us define
\begin{align*}
    \Vt : =\begin{bmatrix}
        \|\avxt - \hxstart \| \\
        \|\bxt - \one \avxt \| \\
        \|\bdt - \one \avdt \|
    \end{bmatrix},\quad
    \Zt : =\begin{bmatrix}
        \bUB\\
        \cUB
    \end{bmatrix}.
\end{align*}

By combining Lemma~\ref{lemma:consensus_error},~\ref{lemma:optimality_error} and~\ref{lemma:tracking_error}, 
we have that 
\begin{align}\label{eq:recursive}
  \Vt 
  \leq 
  A(\alpha) \Vtm + B \Ztm
\end{align}
for $t\ge \bar{t}$ and with probability $1-\delta$,
where 
\begin{align}\notag
    A(\alpha) : =
    \begin{bmatrix}
        \theta &  \alpha \frac{L}{\sqrt{N}} & 0  \\
        0 & \sigma_W & \alpha \\
        a_1 & a_2 & \sigma_W + \alpha L\\
    \end{bmatrix}, 
    \quad 
    B
    :=
    \begin{bmatrix}
    0 & 1\\
    0 & 0 \\
    \sqrt{N} & 0
  \end{bmatrix},
\end{align}
with
$a_1 = \alpha L^2\sqrt{N}\!+\! 2L\sqrt{N}$
and $a_2 = L\|W\!-\!I\|\!+\! 2L\!+\!\alpha L^2\sqrt{N}$.

Now, since by assumption $\alpha\leq N/L$ and $m\leq L$, we have that $\theta = 1-\alpha m/N$ and hence
\begin{multline}\notag
    A(\alpha)= \begin{bmatrix}
        1 & 0 & 0   \\
        0  & \sigma_W & 0 \\
        2L\sqrt{N} & L\|W-I\|+2L & \sigma_W  \\
    \end{bmatrix} \\ 
    +\alpha 
    \begin{bmatrix}
        -\frac{m}{N} & \frac{L}{\sqrt{N}} & 0   \\
          0 &  0 &  1\\
        L^2\sqrt{N} &  L^2\sqrt{N} & L  \\
    \end{bmatrix}.
\end{multline}

We use now \cite[Theorem~6.3.12]{horn2012matrix} for a small perturbation $\alpha>0$. 
For $\alpha = 0$, the eigenvalues of $A(\alpha)$ are $1$ and $ \sigma_W<1$. 
By continuity of the eigenvalues w.r.t.~the matrix coefficients, for small 
enough $\alpha$, the eigenvalues $<1$ will remain $<1$. 
For the single eigenvalue $1$ with left eigenvector $\col(1,0,0)$ and right eigenvector 
$\col(1,0,2L\sqrt{N}/(1-\sigma_W))$, one can use \cite[Theorem~6.3.12(i)]{horn2012matrix}, 
to say that the corresponding eigenvalue of $A(\alpha)$, say $\lambda(\alpha)$, 
will be $|\lambda(\alpha) - 1 + \alpha m/N | \leq \alpha \epsilon$
for any $\epsilon>0$ and sufficiently small $\alpha$.
If then one selects e.g., $\epsilon = \frac{m}{2N}$, then 
$
\lambda(\alpha) \in [1-3 \alpha \frac{m}{2N}, 1-\alpha \frac{m}{2N}],
$
meaning that there exists a small enough $\alpha$, for which all the eigenvalues of $A(\alpha)$ are 
all strictly less than one, and therefore the spectral radius of $A(\alpha)$, say $\rho$, becomes 
strictly less than one.
Also, we notice that the input $\Zt$ is bounded.
Since $\Vt, A, B, \Zt$ have nonnegative entries, we can expand~\eqref{eq:recursive} from $\bar{t}$ and get
 $ \Vt 
  \leq 
  A(\alpha)^{t-\bar{t}} \Vtb + \sum_{\tau=\bar{t}}^{t-1} A(\alpha)^{t-1-\tau}B \Ztau $.
Given Assumptions~\ref{assumption:functions}-\ref{assumption:bounded_variations}, for any finite $\bar{t}$, $\|\Vtb\|$ is bounded. 
Therefore we can write
\begin{align} \nonumber
  \| \Vt  \| 
  & \leq 
  \| A(\alpha)^{t-\bar{t}} \Vtb \| + \Big  \| \sum_{\tau=\bar{t}}^{t-1} A(\alpha)^{t-1-\tau}B \Ztau \Big \| 
  \\ 
    & \leq 
  \rho^{t-\bar{t}} \| \Vtb \| + \sum_{\tau=\bar{t}}^{t-1} \rho^{t-1-\tau} \sqrt{N \bUB^2  + \cUB^2}. \label{con:lin}
 \end{align}
And, taking the limit superior:
\begin{align}\label{con:lim}
  \limsup_{t \to \infty} \| \Vt  \|  = \frac{1}{1-\rho} \sqrt{N \bUB^2  + \cUB^2} =:\bar{c}.  
\end{align}
Eq.~\eqref{con:lin} shows that the first term decreases linearly with rate $\rho$ equal to the 
spectral radius of $A(\alpha)$, while the second term is bounded. 
Eq.~\eqref{con:lim} completes the argument yielding the upper limit of the sequence. This finishes the first part of the proof.

The second part of the proof is based on similar arguments to those used in~\cite[Theorem~1]{qu2018harnessing}.
In particular we have that all the entries of $\Vt$ converges to $\bar{c}$ linearly with rate $O(\rho^k)$. 
Moreover, by exploiting the Lipschitz continuity of the gradients of $\hat{f}(x;t) = \sum_i \hat{f}_i(x;t)$ one has
$
  \hat{f}(\avxt;t)-\hat{f}( \hxstart ;t) 
  \leq 
  \nabla \hat{f}( \hxstart ;t)^\top(\avxt-\hxstart ) 
  +\frac{L}{2}\|\avxt - \hxstart \|^2$.
Now, since $\nabla \hat{f}( \hxstart ;t)=0$ the above implies that 
$
  \hat{f}(\avxt;t) -\hfstart 
  \leq \frac{L}{2}\|\avxt - \hxstart \|^2 \le \frac{L}{2}\|\Vt \|^2$
and hence the $\limsup$ of $\hat{f}(\avxt;t) - \hfstart $ converges linearly to $\frac{L\bar{c}^2}{2}$.
It yields
\begin{equation*}
    \mathbf{Pr}\left(\limsup_{t \to \infty} 
    \sum_{i=1}^N \hf_i(\avxt;t) - \hfstart 
    = 
    \frac{L (N \bUB^2 + \cUB^2)}{2 (1-\rho)^2}\right) \geq 1-\delta.
\end{equation*}
This concludes the second part of the proof. 

The third part concerns the convergence of the consensus metric $C_T$. 
By definition $C_T \leq \|\Vt \|^2$ so that the thesis follows.

The fourth part concerns the boundedness of $\|\avxt\|$, which is bounded by the discussion above as
\begin{equation*}
\|\avxt\|\leq \|\avxt-\hxstart \| + \|\hxstart\| \leq \|\Vt\|+\|\hxstart\|,
\end{equation*}
and since $\|\Vt\|$ is bounded for discussion above and $\|\hxstart\|$ is finite by assumption, then $\|\avxt\|$ is bounded.

Finally, since the above limit results are valid for any $\delta\in(0,1]$, we have that they hold with high probability.

\subsection{Asymptotical bounds for RLS}\label{ap:RLS:prop}

Next we present useful asymptotical bounds for RLS that are necessary 
for Theorem~\ref{thm:regret}. 
Their proofs are given later.
\begin{lemma}\label{lemma:assumption}
Assumption~\ref{assumption:bounded_variations}(i) holds for our RLS scheme.\oprocend
\end{lemma}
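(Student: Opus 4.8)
\textbf{Proof plan for Lemma~\ref{lemma:assumption}.}

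The plan is to show that the RLS recursion~\eqref{eq:algo_learning}, followed by the symmetrization step in~\eqref{eq:unpack}, eventually produces an estimate $\hPit$ that is symmetric (trivially, by construction after symmetrization) and positive semidefinite with eigenvalues bounded above by $\mu L_i$, with probability at least $1-\delta$, for all $t\ge\bar t$. The key point is to leverage Lemma~\ref{lemma:LS}: the RLS iterate $\hait$ coincides with the batch LS solution up to a transient that decays because of the initialization $R_{i,0}=\eta I$ (see~\cite[Chap.~11]{ljung1999system}), so $\hait\to\xi_{i,\star}$ in probability with rate $O(1/\sqrt t)$, where $\xi_{i,\star}=\col(r_i,q_i,[P_i]_1^\top,\dots,[P_i]_n^\top)$ packs the true parameters. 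Since $\unpack(\cdot)$ is a linear (hence continuous) map, the matrix block extracted from $\hait$, call it $\tilde P_{i,t}$, converges in probability to $P_i$, and after the symmetrization $\hPit=(\tilde P_{i,t}+\tilde P_{i,t}^\top)/2$ we still have $\hPit\to P_i$ since $P_i$ is symmetric by Assumption~\ref{assumption:functions}(ii).

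First I would make the convergence statement quantitative: fix $\delta\in(0,1]$ and $\mu>1$. By Lemma~\ref{lemma:LS} (convergence in distribution of $\sqrt t(\hait-\xi_{i,\star})$ to a Gaussian, which implies $\|\hait-\xi_{i,\star}\|\xrightarrow{P}0$), there exists a finite $\bar t$ such that $\mathbf{Pr}(\|\hPit-P_i\|\le \varepsilon \text{ for all } t\ge\bar t)\ge 1-\delta/N$, where $\|\cdot\|$ is the spectral norm and $\varepsilon$ is a tolerance to be chosen. (Strictly, convergence in distribution gives tightness/eventual smallness; one can also invoke the classical almost-sure consistency of LS under the stated stationarity-ergodicity assumptions, which gives the ``for all $t\ge\bar t$'' form directly — I would cite~\cite[Chap.~8,9]{ljung1999system} for this.) Next, by Weyl's inequality, the eigenvalues of $\hPit$ lie within $\varepsilon$ of those of $P_i$, which by Assumption~\ref{assumption:functions}(ii) lie in $[m_i,L_i]$ with $m_i>0$. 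Hence every eigenvalue of $\hPit$ is at least $m_i-\varepsilon$ and at most $L_i+\varepsilon$. Choosing $\varepsilon=\min\{m_i,\,(\mu-1)L_i\}$ forces the eigenvalues into $[0,\mu L_i]$, i.e. $\mu L_i I_n\ge \hPit\ge 0$. Finally, a union bound over $i=1,\dots,N$ gives $\mathbf{Pr}(\mu L_i I_n\ge\hPit\ge0 \text{ for all } i \text{ and all } t\ge\bar t)\ge 1-\delta$, which is exactly Assumption~\ref{assumption:bounded_variations}(i). Boundedness of $\hUit(x)$ for finite $x$ then follows since its coefficients $(\hPit,\hqit,\hrit)$ converge, hence are bounded for $t\ge\bar t$, and the finitely many transient terms $t<\bar t$ are bounded as well.

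The main obstacle I anticipate is justifying the ``for all $t\ge\bar t$'' uniformity in the high-probability statement: convergence in distribution alone (Lemma~\ref{lemma:LS}) only controls the marginal at each $t$, not the whole tail simultaneously. The clean way around this is to appeal to the \emph{strong} consistency of (R)LS — i.e. $\hait\to\xi_{i,\star}$ almost surely — which holds under the stationarity/ergodicity and persistent-excitation ($\Sigma_{xx}\succ0$) hypotheses already invoked for Lemma~\ref{lemma:LS}; almost-sure convergence immediately yields a (random but a.s. finite) time after which $\|\hPit-P_i\|\le\varepsilon$ holds for all later $t$, and the probabilistic statement with a deterministic $\bar t$ follows by choosing $\bar t$ large enough that the event ``the random threshold exceeds $\bar t$'' has probability below $\delta/N$. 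I would therefore state the argument in terms of a.s. consistency, citing~\cite[Chap.~8,9,11]{ljung1999system} and~\cite{GVK313736715}, and note that the transient due to the RLS initialization $R_{i,0}=\eta I$ does not affect the asymptotics. The remaining steps (Weyl's inequality, the choice of $\varepsilon$, the union bound) are routine.
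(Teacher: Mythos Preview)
Your proposal is correct but follows a genuinely different route from the paper's own proof.

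The paper argues via explicit tail-probability calculations: it first treats a scalar least-squares toy case, where $\hat p_t\sim\mathcal N(p,\sigma^2/t)$, bounds $\mathbf{Pr}(\hat p_t\notin[0,\mu L])$ using $\mathrm{erfc}(x)\le e^{-x^2}$, and then performs a Borel--Cantelli-style union bound over $t\ge\bar t$; because the variance scales as $1/t$, the resulting tail probabilities decay like $e^{-c t}$ and are summable, so the union bound yields a finite, explicit $\delta$ that can be made arbitrarily small by choosing $\bar t$ large. In the matrix case the paper invokes the asymptotic normality of Lemma~\ref{lemma:LS}, then reduces the eigenvalue event $\{\mu L_i I\prec\hPit\}$ to a scalar Gaussian tail via the Gershgorin circle theorem and a crude $\max$-over-entries bound, arriving back at the summable scalar picture.

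Your route---almost-sure consistency of (R)LS plus Weyl's inequality---is cleaner and in some respects more rigorous: the paper's use of the \emph{asymptotic} Gaussian law to produce finite-$t$ tail bounds is, strictly speaking, only approximate, whereas your a.s.\ argument gives the ``for all $t\ge\bar t$'' uniformity directly and exactly. The trade-off is that the paper's argument is constructive (one can read off how $\bar t$ depends on $\delta,\mu,L_i,m_i$ and the noise variance), while yours is not; and yours leans on strong consistency of LS under stationarity/ergodicity and persistent excitation, which you correctly flag as an external import rather than a consequence of Lemma~\ref{lemma:LS} alone. Two minor remarks: the union bound over agents $i$ is harmless but not needed, since Assumption~\ref{assumption:bounded_variations}(i) is stated per agent; and your choice $\varepsilon=\min\{m_i,(\mu-1)L_i\}$ should be a strict inequality (e.g.\ $\varepsilon<\min\{m_i,(\mu-1)L_i\}$) to guarantee the eigenvalues land in the closed interval $[0,\mu L_i]$.
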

\begin{lemma}\label{lemma:RLS}
For an estimator satisfying the assumptions of Lemma~\ref{lemma:LS}, 
for any bounded vector $x \in\real^{n}$, the functional learning is bounded as
$| \hUit(x) - U_i(x) | \le O(1/\sqrt{t})$.\oprocend
\end{lemma}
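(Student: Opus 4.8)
The plan is to prove Lemma~\ref{lemma:RLS} by first controlling the parameter estimation error and then pushing it through the quadratic model evaluated at a bounded point. Concretely, I would start from Lemma~\ref{lemma:LS}, which under the stated stationarity/ergodicity/martingale assumptions gives $\sqrt{t}(\hait - \xi_{i,\star}) \xrightarrow{D} \NN(0, \Sigma_{xx}^{-1} S \Sigma_{xx}^{-1})$, hence $\|\hait - \xi_{i,\star}\| = O_P(1/\sqrt t)$, i.e.\ for any $\varepsilon>0$ there is a constant $c_\varepsilon$ and a $\bar t_\varepsilon$ so that $\mathbf{Pr}(\sqrt t\,\|\hait-\xi_{i,\star}\| \le c_\varepsilon \text{ for all } t\ge \bar t_\varepsilon)\ge 1-\varepsilon$. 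Since the map $\xi \mapsto (P,q,r)=\unpack(\xi)$ is linear, $\|\hPit - P_i\|$, $\|\hqit - q_i\|$, $|\hrit - r_i|$ are each bounded by a fixed multiple of $\|\hait - \xi_{i,\star}\|$, and symmetrizing $\hPit \leftarrow (\hPit+\hPit^\top)/2$ in~\eqref{eq:unpack} does not increase the error to the (already symmetric) $P_i$. So all three parameter errors are $O(1/\sqrt t)$ in the same high-probability sense.

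Next I would expand the functional error directly from the quadratic forms~\eqref{eq:hatUit} and Assumption~\ref{assumption:functions}(ii):
\begin{align*}
  |\hUit(x) - U_i(x)|
  &= \Bigl| \tfrac12 x^\top(\hPit - P_i)x + (\hqit - q_i)^\top x + (\hrit - r_i) \Bigr| \\
  &\le \tfrac12 \|x\|^2 \,\|\hPit - P_i\| + \|x\|\,\|\hqit - q_i\| + |\hrit - r_i|.
\end{align*}
Since $x$ is a fixed bounded vector, $\|x\|$ and $\|x\|^2$ are absorbed into constants, and each of the three terms on the right is $O(1/\sqrt t)$ by the previous paragraph; summing three $O(1/\sqrt t)$ terms is still $O(1/\sqrt t)$, which gives $|\hUit(x)-U_i(x)| \le O(1/\sqrt t)$ as claimed. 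For the averaged/cumulative statement referenced elsewhere (the $c_U = O(\sqrt T)$ remark), I would simply note $\sum_{t=1}^T O(1/\sqrt t) = O(\sqrt T)$.

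The main obstacle is not the algebra but making the probabilistic statement clean: Lemma~\ref{lemma:LS} gives convergence \emph{in distribution} of $\sqrt t(\hait-\xi_{i,\star})$, which does not by itself yield a uniform-in-$t$ almost-sure bound. I would address this by invoking the strong consistency of (R)LS under the same assumptions (the classical result cited via~\cite[Chap.~8,9,11]{ljung1999system} and~\cite[Prop.~2.1]{GVK313736715}), which gives $\hait \to \xi_{i,\star}$ a.s.\ together with the $O(1/\sqrt t)$ rate, so that for any $\delta\in(0,1]$ there is a finite $\bar t$ and a constant with the eventual high-probability bound $\|\hait - \xi_{i,\star}\|\le O(1/\sqrt t)$ holding for all $t\ge\bar t$ with probability at least $1-\delta$. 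This is exactly the same w.h.p.\ bookkeeping already used in Assumption~\ref{assumption:bounded_variations}(i) and in the proof of Theorem~\ref{thm:GT_convergence}, so it is consistent with the paper's conventions; with that in hand the displayed chain of inequalities closes the proof, and the $O(\cdot)$ constants depend on $\|x\|$, $L_i$, and the covariance $\Sigma_{xx}^{-1} S \Sigma_{xx}^{-1}$ but not on $t$.
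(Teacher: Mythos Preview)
Your proposal is correct and follows essentially the same approach as the paper. The paper's proof is slightly more compact: it exploits the linear-in-parameters structure to write $|\hUit(x)-U_i(x)| = |(\xi_{i,\star}-\hait)^\top \tv|$ for the vectorized regressor $\tv=\col(1,x,[x]_1 x/2,\dots,[x]_n x/2)$, applies Cauchy--Schwarz once, and concludes from $\|\xi_{i,\star}-\hait\|\le O(1/\sqrt t)$ and boundedness of $\|\tv\|$; your version simply unpacks the same inequality into its $(P,q,r)$ components, which is equivalent. Your additional care about ``convergence in distribution'' versus a uniform-in-$t$ high-probability bound is in fact more rigorous than the paper, which treats the $O(1/\sqrt t)$ rate as given directly by Lemma~\ref{lemma:LS}.
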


\subsection{Proof of Theorem~\ref{thm:regret}}
\label{app:regret}
Recalling the definition of the cumulative dynamic regret in~\eqref{eq:regret}, 
we can write
\begin{align*}
 & 
 R_T(\{\avxt\}_{t=1}^T)\!
 = 
 \!
 \sum_{t=1}^T \! 
 \left ( \!\sum_{i=1}^N \! \left ( \hf_i( \avxt; t) \!+\! U_i(\avxt) \!-\!  \hUit( \avxt) \! \right ) \!-\! \fstart \! \right )\nonumber
 \\
 & \le \sum_{t=1}^T \Big ( \sum_{i=1}^N \hf_i( \avxt; t) - f_\star(t) \Big )
+ \sum_{t=1}^T \sum_{i=1}^N \Big \vert U_i(\avxt) -  \hUit( \avxt) \Big \vert
\end{align*}
Now, fixing a $\delta\in(0,1]$ one determines a $\bar{t}$, and the first term on the right-hand side 
can be split as
$
  \sum_{t=1}^T ( \sum_{i=1}^N \hf_i( \avxt; t) - f_\star(t)  ) 
  =
  \sum_{t=1}^{\bar{t}-1} ( \sum_{i=1}^N \hf_i( \avxt; t) - f_\star(t) )
  +
  \sum_{t=\bar{t}}^T ( \sum_{i=1}^N \hf_i( \avxt; t) - f_\star(t) )
$,
where in the first $\bar{t}$ iterations the functions $\hf_i$, in general, could have been nonconvex, 
while they are convex after $\bar{t}$ with probability $1-\delta$. 
Notice now that by Assumptions~\ref{assumption:functions}-\ref{assumption:bounded_variations},
both $V_i(x;t)$ and $\hUit(x)$ are bounded for all bounded $x$ and all $i$ and $t$. 
Moreover, by Theorem~\ref{thm:GT_convergence}, $\|\avx_t\|$ is uniformly bounded. 
Thus, we can bound the quantity $\sum_{t=1}^{\bar{t}-1} \left ( \sum_{i=1}^N \hf_i( \avxt; t) - f_\star(t) \right )$ by $O(\bar{t})$. Then, 
\begin{align}
\notag
  R_T(\{\avxt\}_{t=1}^T)
  &
  \le 
  O(\bar{t}) 
  + \sum_{t=\bar{t}}^T \Big ( \sum_{i=1}^N \hf_i( \avxt; t) - f_\star(t) \Big )
  \\
  &\qquad 
  + \sum_{t=1}^T \sum_{i=1}^N \Big \vert U_i(\avxt) -  \hUit( \avxt) \Big \vert.
\label{eq:RT}
\end{align}

Now, we can use the fact that
\begin{align*}
f_\star(t) %
& = \hf_\star(t) + \underbrace{(\hf(\xstart; t)-\hf_\star(t))}_{(I)} 
+ \underbrace{(f_\star(t) - \hf(\xstart; t))}_{(II)}.
\end{align*}
In addition, by strong convexity of $\hat{f}$ and optimality, 
it holds $(I) \ge \frac{m}{2} \|\xstart-\hxstart \|^2 \ge 0$,
while 
$(II) \ge - \sum_{i=1}^N \left\vert U_i(\xstart) - \hUit(\xstart)\right \vert$.
Putting these facts together %
in the expression of the dynamic regret~\eqref{eq:RT}, then, 
\begin{align}%
  R_T(\{\avxt\}_{t=1}^T)
  \le 
  O(\bar{t}) +O(c_U) + \sum_{t=\bar{t}}^T \Big ( \sum_{i=1}^N \hf_i( \avxt; t) - \hf_\star(t) \Big ).
\label{eq:RT1}
\end{align}
Now, by using Theorem~\ref{thm:GT_convergence}, the second term can be upper bounded as
\begin{multline}
  \sum_{t=\bar{t}}^T \Big ( \sum_{i=1}^N \hf_i( \avxt; t) - \hf_\star(t) \Big)
  \\[-0.1cm] 
  \le
  \sum_{t=\bar{t}}^T O(\rho^{t-\bar{t}}) + 
  O\Big ( (T-\bar{t})\frac{L (N \bUB^2 + \cUB^2)}{2 (1-\rho)^2} \Big),
  \label{eq:hf_bound}
\end{multline}
Hence, by combining~\eqref{eq:RT1} and~\eqref{eq:hf_bound} we have that, with probability $1-\delta$,
\begin{align*}
  R_T(\{\avxt\}_{t=1}^T)
  & 
  \le O(1) + O (c_U) + O \Big( T\, \frac{L (N \bUB^2 + \cUB^2)}{2 (1-\rho)^2} \Big) 
\end{align*}
where we used the fact that, since $\bar{t}$ is finite, $O(\bar{t}) +  \sum_{t=\bar{t}}^T O(\rho^{t-\bar{t}}) = O(1)$. 
Since the above is valid with probability $1-\delta$, for any $\delta \in (0,1]$, it is valid with high probability.

As for the consensus metric $C_T$, everything goes as in the proof of Theorem~\ref{thm:GT_convergence}, with the difference to be valid with high probability, which concludes the proof.

\subsection{Proof of Lemma~\ref{lemma:consensus_error}}

  By using~\eqref{eq:xy_avg_update} and~\eqref{eq:xy_update_stack}, one has
  \begin{align*}
    \|\bxt- \one \avxt \| & = \| W \bxtm-\alpha\bdtm - \one \avxtm + \alpha \one \avdtm \|.
  \end{align*}
  Now, by using the triangle inequality and exploiting Assumption~\ref{assumption:communication} we get
  \begin{align*}
    \|\bxt - \one \avxt \|
    &\leq \|W\bxtm- \one \avxtm \| + \alpha \|\bdtm - \one \avdtm \|\\
    &\leq \sigma_{W}\|\bxtm- \one \avxtm \| + \alpha \|\bdtm - \one \avdtm\|
  \end{align*}
  thus concluding the proof.

\subsection{Proof of Lemma~\ref{lemma:optimality_error}}

  By using~\eqref{eq:xy_update_stack} one has
  \begin{align*}
    \|\avxt - \hxstart\|
    & =\|\avxtm-\alpha\avdtm-\hxstart\|
    \\
    &=\|\avxtm-\alpha\avdtm \pm \hxstartm -\hxstart\|
    \\
    &
    \leq \|\avxtm-\alpha \avdtm-\hxstartm \| + \cUB
  \end{align*}
  where in the second line we add and subtract $\hxstartm$ and in the last one we 
    exploit Assumption~\ref{assumption:bounded_variations} with the triangle inequality. 
    Now, by adding and subtracting $\alpha \avgtm$ inside the norm, we get
  \begin{align}
    & \|\avxt- \hxstart \|\nonumber
    \\
    &\leq 
    \left \| \avxtm-\alpha\avdtm - \hxstartm - \alpha \avgtm+\alpha \avgtm\right \| + \cUB \nonumber
    \\
    &
    \leq 
    \left \| \avxtm-\alpha \avgtm - \hxstartm \right \| 
      + \left \| \alpha \avgtm-\alpha\avdtm \right \| + \cUB\nonumber
    \\
    &\leq \theta\|\avxtm-\hxstartm \|+ \left \| \alpha \avgtm-\alpha\avdtm \right \| + \cUB \nonumber
    \\
    &\leq \theta\|\avxtm-\hxstartm \|\!+\!\alpha\frac{L}{ \sqrt{N}}\|\bxtm-\one \avxtm \| + \cUB
    \end{align}
    in which in the second line we use the triangle inequality, 
    in the third one we exploit the convergence rate result for a gradient iteration 
    applied to a smooth and strongly convex function 
    and in the last one we use the Lipschitz continuity of each $\nabla \hf_i$ 
    and~\eqref{eq:conservation_y} to write 
    \begin{align*}
      \|\avgt-\avdt \|
      & =
      \left  \|\frac{1}{N} \sum_{i=1}^N \nabla \hf_i(\avxt; t) - \nabla\hf_i(\xit; t) \right \| 
      \\ 
      & \le 
      \frac{1}{N}\sum_{i=1}^N \| \nabla \hf_i(\avxt; t) - \nabla\hf_i(\xit; t)\| 
      \\
      & \le 
      \frac{1}{N}\sum_{i=1}^N (L_{V}+\mu L_i) \| \avxt - \xit \| 
      \\
      & \le 
      \frac{1}{N} \max_i (L_{V}+\mu L_i) \sum_{i=1}^N \| \avxt - \xit \| 
      \\
      & \le 
      \frac{1}{N} \max_i (L_{V}+\mu L_i) \sqrt{N} \| \one \avxt - \bxt \| 
    \end{align*}
    where the last inequality follows from concavity of the square root function.

\subsection{Proof of Lemma~\ref{lemma:tracking_error}}

  From~\eqref{eq:xy_avg_update} and~\eqref{eq:xy_update_stack}, one has
  \begin{align}
    \|\bdt - \one \avdt\|
    & =
    \Bigg\Vert W \bdtm + \left( \bgt - \bgtm\right) \nonumber
    \\
    & \hspace{3ex}
    -
    \one \left(\avdtm + \frac{1}{N} \sum_{i=1}^N \left( \git - \gitm\right)\right)\Bigg\Vert
  \end{align}
  
  Now, by using the triangle inequality, we have
  \begin{align}  
    &\|\bdt-\one \avdt \|\nonumber
    \\
    &\leq\| W \bdtm - \one \avdtm \|
      + \left \Vert \left(I-\tfrac{1}{N}\one\one^\top\right) \left( \bgt - \bgtm\right)\right\Vert\nonumber
    \\
    &\leq\sigma_W\|\bdtm - \one \avdtm \|
    + \left \Vert I-\tfrac{1}{N}\one\one^\top \right\Vert \|\bgt - \bgtm\|\nonumber\\
    &\leq\sigma_W\|\bdtm -\one \avdtm \|+\|\bgt - \bgtm\|
    \label{eq:p1}
  \end{align}
  where, in the second inequality we used the contraction property of the consensus matrix $W$ 
  and the sub-multiplicativity of $2$-norm 
  and in the third inequality the fact that $\Vert I-\frac{1}{N}\one\one^\top\Vert\leq 1$.
  Now define
  \begin{align*}
    \btgt \triangleq \col ( \nabla \hf_1(\xitm; t), \dots, \nabla \hf_N(\xitm; t) ).
    \end{align*}  
  By adding and subtracting $\btgt$ inside the second term of~\eqref{eq:p1} and 
  using the triangle inequality we have
  \begin{align}
    & \|\bdt - \one \avdt \| \nonumber\\
    &\leq\sigma_W \|\bdtm -\one\avdtm \|+\| \bgt - \btgt\|+\|\btgt- \bgtm\|\nonumber
    \\
    &\leq\sigma_W \|\bdtm -\one\avdtm \|+L\| \bxt-\bxtm\|+\|\btgt - \bgtm\|\nonumber
    \\
    &\leq\sigma_W \|\bdtm -\one\avdtm \|+L\| W\bxtm-\alpha\bdtm-\bxtm\| \nonumber
    \\ 
    & \hspace{2cm }+ 2L \|\bxtm - \one \hxstartm\| + \sqrt{N}\bUB  %
    \label{eq:yp1}
  \end{align}
  where in the second line we use Assumption~\ref{assumption:functions}, in the third one Assumption~\ref{assumption:bounded_variations}, and in the last line Eq.~\eqref{eq:mu}. Then by 
  adding and subtracting $\pm \one \avxtm$ inside $\|\bxtm - \one \hxstartm\|$
  and using the triangle inequality:
\begin{align}
  & \|\bdt - \one \avdt \| \nonumber\\
  &\leq\sigma_W \|\bdtm -\one\avdtm \|+L\| W\bxtm-\alpha\bdtm-\bxtm\| \nonumber
  \\ 
  & + 2 L \|\bxtm - \one \avxtm\| \!+\! 2L \sqrt{N} \|\avxtm - \hxstartm\| + \sqrt{N}\bUB.\label{eq:yp1bis}
\end{align} 
  
Consider now the second term in the right-hand side of~\eqref{eq:yp1}. We have that
  \begin{align}
    &\| W\bxt-\alpha\bdt-\bxt\|\nonumber\\
    &=\|(W-I)(\bxt - \one \avxt )-\alpha\bdt\|\nonumber\\
        &\leq \|W-I\|\|\bxt-\one\avxt \|+\alpha \|\bdt\| \nonumber\\
        &= \|W-I\|\|\bxt-\one \avxt \|+\alpha \|\bdt \pm \one \avdt \| \nonumber\\
        &\leq \|W-I\|\|\bxt-\one\avxt\|+\alpha \|\bdt-\one\avdt\| + \alpha \|\one\avdt\|\label{eq:yp2}
  \end{align}
  Moreover, regarding the term $\|\avdt\one\|$ in~\eqref{eq:yp2}, one has
  \begin{align}
    \|\one \avdt\|
        &=N\|\avdt\|=N\left\Vert \frac{1}{N}\sum_{i=1}^N \nabla f_i(\xit; t)\right\Vert\nonumber\\
        &=N\left\Vert \frac{1}{N}\sum_{i=1}^N (\nabla f_i(\xit; t)-\nabla f_i( \xstart ; t))\right\Vert\nonumber\\
        &\leq L\sum_{i=1}^N\| \xit - \hxstart \|\nonumber\\
        &\leq L\sqrt{N}\| \bxt - \one \hxstart  \|\nonumber\\
        &\leq L\sqrt{N} \|\bxt - \one \avxt \| + L\sqrt{N} \|\avxt-\hxstart \|\label{eq:yp3}
  \end{align}
  where in the second line we used the optimality of $\hxstart$, in the third one the 
  Lipschitz continuity of $\nabla \hat{f}_i$, in the fourth one the algebraic property of the 2-norm 
  and in the last one we added and subtracted $\one\avxt$ and used the triangle inequality.
  The proof is completed by combining~\eqref{eq:yp1bis}, \eqref{eq:yp2} and \eqref{eq:yp3}.
\subsection{Proof of Lemma~\ref{lemma:assumption}}

We start by considering a scalar least squares case, and then looking at the multi-dimensional scenario. 

Consider a scalar case, in which the parameter $p$ is estimated as $\hat{p}_t = p + \frac{1}{t}\sum_{i=1}^t \epsilon_i$, for measurements $m_i = p + \epsilon_i$, and $\epsilon_i \sim \mathcal{N}(0, \sigma^2)$. Then, $\hat{p}_t$ is a random variable drawn from $\mathcal{N}(p, \sigma^2/t)$. Suppose that $L \ge p > 0$ and $\mu>1$. The probability that $\hat{p}_t$ is \emph{not} in $[0, \mu L]$ is now, 
\begin{align*}%
\delta_t = \frac{1}{\sqrt{2\pi}\sigma_t} \left[\int_{-\infty}^{0} \mathrm{e}^{-\frac{1}{2}\left(\frac{x-p}{\sigma_t}\right)^2} \mathrm{d} x + \int_{\mu L}^{+\infty}  \mathrm{e}^{-\frac{1}{2}\left(\frac{x-p}{\sigma_t}\right)^2} \mathrm{d} x \right]
\end{align*}
where we have defined $\sigma_t = \sigma/\sqrt{t}$. Now, by the definition of complementary error function $\mathrm{erfc}$, the above is equivalent to
\begin{align}\notag
\delta_t = \frac{1}{2}\mathrm{erfc}\left(\frac{2 p}{\sqrt{2} \sigma_t} \right) + \frac{1}{2}\mathrm{erfc}\left(\frac{\mu L - p}{\sqrt{2} \sigma_t} \right),
\end{align}
while by the fact that $\mathrm{erfc}(x) \le \mathrm{e}^{-x^2}$ for $x>0$~\cite{Chiani2003}, then,
\begin{align}\notag
\delta_t \leq \frac{1}{2}\mathrm{e}^{-\left(\frac{2 p}{\sqrt{2} \sigma_t} \right)^2} + \frac{1}{2}\mathrm{e}^{-\left(\frac{\mu L - p}{\sqrt{2} \sigma_t} \right)^2}.
\end{align}

By the union bound, the probability that $\hat{p}_t$ is \emph{not} in $[0, \mu L]$ for \emph{at least one time} $T>t\ge\bar{t}$ is then 
\begin{align*}
  & \sum_{t=\bar{t}}^T \frac{1}{2}\mathrm{e}^{-\left(\frac{2p}{\sqrt{2} \sigma_t} \right)^2} 
  + 
  \frac{1}{2}  
  \mathrm{e}^{-\left(\frac{\mu L - p}{\sqrt{2} \sigma_t} \right)^2} 
  \\
  & \le 
  \frac{1}{2} \int_{t=\bar{t}}^T \mathrm{e}^{-\left(\frac{2p}{\sqrt{2} \sigma} \right)^2 t} 
  + \mathrm{e}^{-\left(\frac{\mu L - p}{\sqrt{2} \sigma} \right)^2 t} \mathrm{d} t 
  \\ 
  & \le \sigma^2  \left(\frac{1}{4p^2}\mathrm{e}^{-\left(\frac{2p}{\sqrt{2} \sigma} \right)^2 \bar{t}} 
  + \frac{1}{(\mu L - p)^2} \mathrm{e}^{-\left(\frac{\mu L - p}{\sqrt{2} \sigma} \right)^2 \bar{t}} \right) 
  =: \delta.
\end{align*}
Then, the probability that $\hat{p}_t$ is  in $[0, \mu L]$ for at all times $T>t>\bar{t}$, with $T \to \infty$ is $1-\delta$. Assumption~\ref{assumption:bounded_variations}(i) is then verified, since for any $\delta\in(0,1]$, one can find a finite and big enough $\bar{t}$ for which the probability that $\hat{p}_t$ is  in $[0, \mu L]$ for at all times $t\ge\bar{t}$ is $1-\delta$.

Let us look now at the multi-dimensional case. First, symmetry of $\hat{P}_{i,t}$ is guaranteed by construction. Then the fact that $\mu L_i I \ge \hat{P}_{i,t} \ge 0$ in probability, can be guaranteed looking at the asymptotical distribution of the LS estimator (Lemma~\ref{lemma:LS}), and for continuity for large enough $\bar{t}$. 

Let us start from the fact that 
  \begin{align*}
    \sqrt{t}(\hat{\xi}_{i,t} - \xi_{i,\star} ) \xrightarrow{D} \mathcal{N}(0,\Sigma_{xx}^{-1}S\Sigma_{xx}^{-1} ),
  \end{align*}
which means that
$$
\hat{\xi}_{i,t} - \xi_{i,\star}  \xrightarrow{D} \mathcal{N}(0,\Sigma_{xx}^{-1}S\Sigma_{xx}^{-1}/t )
$$
and let us analyze the asymptotic distribution. Part of $\hat{\xi}_{i,t}$ will be packed into the matrix $\hat{P}_{i,t}$, and in particular, we want to look at the probabilities that 
$
\Pr(\mu L_i I < \hat{P}_{i,t})
$
or
$
\Pr(\hat{P}_{i,t} < 0) 
$
for a $t\geq \bar{t}$, in a matrix sense. We can use the implications,
\begin{align*}
\Pr(\hat{P}_{i,t}> \mu L_i I) &\leq \Pr(\hat{P}_{i,t} - {P}_{i}> (\mu-1) L_i I),\\ \Pr(\hat{P}_{i,t} < 0 ) &\leq \Pr(\hat{P}_{i,t} - {P}_{i} < -m_i I ),
\end{align*}
and consider the latter ones. 

For sufficiently big $\bar{t}$, each entries of error matrix $\hat{P}_{i,t} - {P}_{i}$, say $e_{jk}$ is a random variable, normally distributed, with zero mean. Their covariance matrix is correlated across all entries, but we can look at an upper bound (via eigenvalue decomposition), so that $e_{jk} \sim \mathcal{N}(0,O(\max \lambda(\Sigma_{xx}^{-1}S\Sigma_{xx}^{-1}))/t )$, where $O(\cdot)$ hides a possible constant independent of $t$ and $\lambda(\cdot)$ represents the eigenvalues.

We look now at the eigenvalues of $E_{i,t} = \hat{P}_{i,t} - {P}_{i}$, say $\lambda(E_{i,t})$. We have the following line of implications,
\begin{align*}
\Pr(\hat{P}_{i,t} \!-\! {P}_{i}> (\mu\!-\!1) L_i I) &= \Pr(\min_j \,\lambda_j(E_{i,t}) >(\mu\!-\!1) L_i)\\ 
& \hskip-15mm\leq \Pr(\frac{1}{n}\sum_j \,\lambda_j(E_{i,t}) >(\mu-1) L_i)\\
& \hskip-15mm\leq \Pr(\frac{1}{n}\sum_j {e_{jj} + \sum_{k, k\neq j}|e_{jk}|} >(\mu-1) L_i)\\
& \hskip-15mm\leq \Pr( n \max_{j,k}|e_{jk}| >(\mu-1) L_i) \\
& \hskip-15mm\leq n^2\Pr( n |e_{jk}| >(\mu-1) L_i) \\
& \hskip-15mm= 2 n^2\Pr( n e_{jk} >(\mu-1) L_i).
\end{align*}
Where the first line comes from the definition of definite positiveness; the second line comes from the definition of min; the third line from Gershgorin eigenvalue circle theorem; the fourth line from the definition of the max, and the last two lines from the definition of probability and its symmetric property w.r.t. the zero mean value. 

So the $\Pr(\hat{P}_{i,t}> \mu L_i I)$ is less or equal than  $2 n^2\Pr(e_{jk} >(\mu-1) L_i/n)$, for normal random variable $e_{jk}\sim \mathcal{N}(0,O(\max \lambda(\Sigma_{xx}^{-1}S\Sigma_{xx}^{-1}))/t )$. A similar expression holds for $\Pr(\hat{P}_{i,t} < 0 )$ which is less or equal than $2 n^2\Pr( e_{jk} >m_i /n)$. Therefore the probability that $\hat{P}_{i,t}$ is not in $[0, \mu L_i]$ for at least one time $T>t\geq \bar{t}$ can be computed as 
$$
\sum_{t = \bar{t}}^T 2 n^2\Pr( e_{jk} >\min \{(\mu-1) L_i/n,m_i /n\} ) =: \delta.
$$
The existence of a finite and computable $\delta$ can be proven in the same way as the scalar case (since now we are in a scalar case) given that the variance scales as $1/{t}$, which proves the claim. 

\subsection{Proof of Lemma~\ref{lemma:RLS}}

The inequality follows from Lemma~\ref{lemma:LS}, 
\begin{align*}
\left \vert \hUit(x) - U_i(x) \right \vert
  =
\left \vert  (\xi_i^\star - \hait )^\top \tv \right \vert
 \le
\left \Vert \xi_i^\star - \hait \right \Vert \left \Vert \tv \right \Vert,
\end{align*}
where we set $\tv\triangleq\col(1, x, [x]_1x/2, \dots, [x]_Nx/2)$. 
Since by hypothesis $\|x\|$ is bounded, and by Lemma~\ref{lemma:LS} $\|\xi_i^\star - \hait\| \leq O(1/\sqrt{t})$, then, $|\hUit(x) - U_i(x)  | \leq O(1/\sqrt{t})$.

\subsection{Proof of agent $j$ specific regret bound}
\label{extra-bound}
Consider the agent $j$ specific regret
\begin{equation*}
\mathcal{R}_{j,T} \triangleq R_{T}(\{x_{j,t}\}_{t=1}^T) = \sum_{t=1}^T \left (\sum_{i=1}^N f_i(x_{j,t}; t)  - \fstart \right )
\end{equation*}
We can bound similarly to how we bounded $R_T(\{\avxt\}_{t=1}^T)$ in the proof of Theorem~\ref{thm:regret}.

First, note that since the consensus metric $C_T$ is bounded and $\avxt$ is bounded (both from Theorem~\ref{thm:regret}), then $x_{j,t}$ is bounded. Therefore, we can use the same steps of the proof of Theorem~\ref{thm:regret} applied to $x_{j,t}$ instead of $\avxt$ up to Equation~\eqref{eq:RT1}, which reads now
\begin{align}%
  R_T(\{x_{j,t}\}_{t=1}^T)
  \le 
  O(\bar{t}) \!+\! O(c_U) \!+\! \sum_{t=\bar{t}}^T\! \Big (\! \sum_{i=1}^N \hf_i( x_{j,t}; t) - \hf_\star(t)\!\Big ).
\label{eq:RTX}
\end{align}
We bound now the second term with the following line of reasoning, (with $\hf(x;t) = \sum_{i=1}^N\hf_i(x;t)$)
\begin{align*}
\hf(x_{j,t}; t) - \hf(\avxt; t) &\leq \nabla \hf(\avxt; t)^\top (x_{j,t}-\avxt) + \frac{L}{2}\|x_{j,t}-\avxt\|^2\\
&\hspace*{-2.5cm}\leq  (\nabla \hf(\avxt; t)-\nabla \hf(\hat{x}_{\star}(t); t))^\top (x_{j,t}-\avxt) + \frac{L}{2}\|x_{j,t}-\avxt\|^2\\
&\hspace*{-2.5cm}\leq L\|\avxt-\hat{x}_{\star}(t)\|\|x_{j,t}-\avxt\| + \frac{L}{2}\|x_{j,t}-\avxt\|^2 \\
&\hspace*{-2.5cm}\leq L\|\Vt\|\|\Vt\| + \frac{L}{2}\|\Vt\|^2  = \frac{3}{2} L\|\Vt\|^2.
\end{align*}
Therefore, $\sum_{i=1}^N \hf_i( x_{j,t}; t) - \hf_\star(t) \leq \sum_{i=1}^N \hf_i( \avxt; t) - \hf_\star(t) + \frac{3}{2} L\|\Vt\|^2 $, and by using Theorem~\ref{thm:GT_convergence}:
\begin{multline}
  \sum_{t=\bar{t}}^T \left ( \sum_{i=1}^N \hf_i( x_{j,t}; t) - \hf_\star(t) \right )
  \\  
  \le
  \sum_{t=\bar{t}}^T O(\rho^{t-\bar{t}}) + 
  O\left( (T-\bar{t})\frac{2 L (N \bUB^2 + \cUB^2)}{ (1-\rho)^2}\right),\label{eq:hf_boundX}
\end{multline}
from which
\begin{align*}
  R_T(\{x_{j,t}\}_{t=1}^T)
  & 
  \le O(1) + O (c_U) + O\left( T\, \frac{2 L (N \bUB^2 + \cUB^2)}{(1-\rho)^2}\right), 
\end{align*}
and the thesis is proven.

\end{document}